\renewcommand\footnotetextcopyrightpermission[1]{}
\setlist[enumerate]{noitemsep, topsep=0pt}
\newcommand{\name}{Tofu\xspace}
\newcommand{\code}[1]{\texttt{\small #1}}
\newcommand{\kw}[1]{\code{\color{blue}#1}}
\newcommand{\pnr}{partition-n-reduce\xspace}
\newcommand{\Pnr}{Partition-n-reduce\xspace}
\newcommand{\wnet}{WResNet\xspace}
\newcommand{\secref}[1]{Sec~\ref{#1}}
\newtheorem{theorem}{Theorem}
\newtheorem{lemma}{Lemma}
\newtheorem{corollary}{Corollary}
\newtheorem*{thm:greedy}{Theorem \ref{thm:greedy}}
\def\nocolormarks{}
\newcommand{\textred}[1]{{\color{red}{#1}}}
\newcommand{\pgwrapper}[3]{\begingroup {\color{#1} #2: #3} \endgroup}
\newcommand{\textred}[1]{#1}
\newcommand{\pgwrapper}[3]{}
\def\hn{\sffamily\selectfont}
\newcommand{\mpfont}{\hn\scriptsize}
\newcommand{\MPworker}[2]{\unskip{\color{#1}\vrule\vrule}{\marginpar{\raggedright\color{#1}\mpfont #2}}}
\newcommand{\MPworker}[2]{\unskip}
\newcommand{\REV}[1]{\MPworker{blue}{#1}}
\newcommand{\changebars}[2]{%
	({\color{magenta}\sout{#1}}){\color{red}\em\begingroup{#2}\endgroup}}
\newcommand{\changebars}[2]{#1}
\begin{document}

\title{Supporting Very Large Models using Automatic \\Dataflow Graph Partitioning}

\author{Minjie Wang}
\affiliation{New York University}

\author{Chien-chin Huang}
\affiliation{New York University}

\author{Jinyang Li}
\affiliation{New York University}
\date{}

\begin{abstract}
	
	This paper presents \name, a system that partitions very large DNN models across multiple GPU devices
	to reduce per-GPU memory footprint.  \name is designed to partition a dataflow
	graph of fine-grained tensor operators \changebars{used by platforms like MXNet and TensorFlow}{in order to work transparently with a
		general-purpose deep learning platform like MXNet}.  In order to
	automatically partition each operator, we propose to describe the
	semantics of an operator in a simple language 
	\changebars{inspired by Halide}{which represents tensors as lambda
		functions mapping from tensor coordinates to values}.  To optimally
	partition different operators in a dataflow graph, \name uses a
	recursive search algorithm that minimizes the total communication cost.
	Our experiments on an 8-GPU machine show that \name enables the
	training of very large CNN and RNN models.  It also achieves 25\% -
	400\% speedup over alternative approaches to train very large models.
	
\end{abstract}

\copyrightyear{2019}
\acmYear{2019}
\setcopyright{acmlicensed}
\acmConference[EuroSys '19]{Fourteenth EuroSys Conference 2019}{March
	25--28, 2019}{Dresden, Germany}
\acmBooktitle{Fourteenth EuroSys Conference 2019 (EuroSys '19), March 25--28,
	2019, Dresden, Germany}
\acmPrice{15.00}
\acmDOI{10.1145/3302424.3303953}
\acmISBN{978-1-4503-6281-8/19/03}

\begin{CCSXML}
	<ccs2012>
	<concept>
	<concept_id>10010520.10010521.10010542.10010294</concept_id>
	<concept_desc>Computer systems organization~Neural networks</concept_desc>
	<concept_significance>300</concept_significance>
	</concept>
	<concept>
	<concept_id>10010520.10010521.10010542.10010545</concept_id>
	<concept_desc>Computer systems organization~Data flow architectures</concept_desc>
	<concept_significance>300</concept_significance>
	</concept>
	</ccs2012>
\end{CCSXML}

\ccsdesc[300]{Computer systems organization~Neural networks}
\ccsdesc[300]{Computer systems organization~Data flow architectures}


\maketitle

\section{Introduction}\label{s:intro}

The deep learning community has been using larger deep
neural network (DNN) models to achieve higher accuracy on more complex tasks over the
past few years~\cite{wideresnet,google:nmt}. Empirical evidence shows that, since the 80s,
the number of parameters in the state-of-the-art neural network has doubled
roughly every 2.4 years~\cite{deeplearningbook}, enabled by hardware
improvements and the availability of large datasets.  As deployed DNN models
remain many orders of magnitude smaller than that of a mammalian brain,
there remains much room for growth. However, the size of a DNN model that
can be explored today is constrained by the limited GPU device memory.

There have been many efforts to tackle the problem of limited GPU device
memory.  Some proposals try to fit larger models into a single GPU, e.g. by
using the much larger CPU memory as a swap area for the GPU~\cite{meng2017training}
or by discarding intermediate results to save memory at the cost of
re-computation~\cite{gruslys2016memory,martens2012training,chen2016training}.
Another promising solution is to partition a DNN model across multiple GPU
devices. Doing so reduces per-GPU memory footprint and comes with the
additional benefit of parallel speedup.  This is commonly referred to as
``model parallelism'' in the literature.

A DNN model consists of a large number of layers, each parameterized by its own
weights.  There are two approaches to realize model parallelism.  One approach
is to assign the computation of different layers to different devices.  
The second approach is to partition the tensors to parallelize 
each layer across devices.  For very large DNN models, tensor partitioning is
the better approach; not only it results in balanced per-GPU memory usage but
also it necessary for speeding up
popular models such as CNNs.



Tensor partitioning has been explored by existing work 
as a means for achieving parallel speedup~\cite{dean:nn,coates2013deep,projectadam}
or saving memory access energy~\cite{yang2016systematic,neurocube}.
Recent proposals~\cite{tetris, jia2018exploring, jia2018beyond} support
partitioning a tensor along multiple dimensions and can automatically search
for the best partition dimensions. The major limitation is that these proposals
partition at the coarse granularity of individual DNN layers, \textred{such as fully-connected 
and 2D convolution layers}. As such, they
either develop specialized implementation for specific models~\cite{tetris,
coates2013deep} or allow only a composition of common DNN
layers~\cite{jia2018exploring, jia2018beyond, dean:nn, projectadam}.

However, the vast majority of DNN development and deployment today occur on
general-purpose deep learning platforms such as TensorFlow~\cite{tensorflow},
MXNet~\cite{mxnet}, PyTorch~\cite{pytorch}. These platforms represent
computation as a dataflow graph of fine-grained tensor operators, \textred{such as matrix 
multiplication, various types of convolution and element-wise operations etc}.
Can we support tensor partitioning on one of these general-purpose platforms?
To do so, we have built the \name system \textred{to automatically partition
the input/output tensors of each operator in the MXNet dataflow system.}
This approach, which we call {\em operator partitioning}, is more fine-grained
than layer partitioning. \textred{While we have built \name's prototype to work
with MXNet, \name's solution is general and could potentially be applied to
other dataflow systems such as TensorFlow.}

In order to partition a dataflow graph of operators, \name
must address two challenges. 1) How to partition the input/output tensors and parallelize the execution
an individual operator? What are the viable partition dimensions? 2) how to optimize the partitioning of different operators 
for the overall graph? Both
challenges are made difficult by the fine-grained approach of partitioning
operators instead of layers. For the first challenge, existing
work~\cite{tetris,jia2018exploring,jia2018beyond} manually discover how to partition a few common
layers.  However, a dataflow framework supports a large and growing collection of operators 
(e.g. 139 in MXNet), intensifying the manual efforts.
Manual discovery is also error-prone, and can miss certain partition strategies.
\textred{For example, \cite{jia2018exploring} misses a crucial partition strategy 
that can significantly reduce per-worker memory footprint (\secref{ss:eval-partition}).}
For the second challenge, existing proposals use greedy or dynamic-programming based
algorithms~\cite{tetris,jia2018exploring} or stochastic searches~\cite{jia2018beyond}.
As the graph of operators is more complex and an order of magnitude larger than the graph of
layers (e.g. the graph for training a 152-layer ResNet has >1500 operators in MXNet),
these algorithms become inapplicable or run too slowly (\secref{s:dataflow}, Table~\ref{tbl:improve}).


%
\name introduces novel solutions to address the above mentioned challenges. 
To enable the automatic discovery of an operator's partition dimensions, \name requires
developers to specify what the operator computes using a lightweight description
language called TDL.  Inspired by Halide~\cite{halide}, TDL describes tensor
computation by specifying the output tensor value at each index with simple
expressions on the input tensors. \textred{The Halide-style description is useful because 
it makes explicit which input tensor regions are needed in order to compute a
specific output tensor region.} Thus, 
\name can statically analyze an operator's TDL
description using symbolic execution to determine \textred{what input regions must be transferred among GPUs when tensors are divided along a specific partition dimension.} 
To partition each tensor in the overall dataflow
graph, we propose several techniques to shrink the search space. These include
a recursive search algorithm which partitions the graph among only two workers at
each recursive step, and graph coarsening by grouping related operators.


We have implemented a prototype of \name in MXNet and evaluated its
performance on a single machine with eight GPUs. Our experiments use
large DNN models including Wide ResNet~\cite{wideresnet} and Multi-layer
Recurrent Neural Networks~\cite{rnnlimit}, most of which do not fit in a single GPU's
memory. 
Compared with other approaches to train large models, \name's training
throughput is 25\% - 400\% higher. 

To the best of our knowledge, \name is the first system to automatically
partition a dataflow graph of fine-grained tensor operators.  Though promising,
\name has several limitations (\secref{s:discussion}).  Some operators (e.g.
Cholesky) cannot be expressed in TDL and thus cannot be automatically
partitioned.  The automatically discovered partition strategies do not exploit
the underlying communication topology.  \name is also designed for very large
DNN models.
For moderately sized models that do fit in the memory of a single
GPU,  \textred{\name's approach of operator partitioning are likely no better than 
the much simpler approach of data parallelism.}
Removing these limitations requires further research.

\section{Background}
\label{sec:background}

{\bf The problem.} Training very large DNN models is limited
by the size of GPU device memory today. Compared with CPU memory,
GPU memory has much higher bandwidth but also smaller capacity, ranging from
12GB (NVIDIA K80) to 16GB (NVIDIA Tesla V100).  Google's TPU hardware has
similar limitations, with 8GB attached to each TPU
core~\cite{tpu:architecture}.

Partitioning each tensor in the DNN computation across multiple 
devices can lower per-GPU memory footprint, thereby allowing very large models
to be trained. When partitioning 
across $k$ devices, each device roughly consumes $\frac{1}{k}$ times the total
memory required to run the computation on one device.  Furthermore,
partitioning also has the important benefit of performance speedup via parallel
execution. As most DNN development today is done 
on dataflow platforms such as TensorFlow and MXNet, our goal is to
automatically partition the tensors and parallelize the operators in a dataflow
graph to enable the training of very large DNN models. The partitioning should
be completely transparent to the user: the same program written for a single
device can also be run across devices without changes.


\vspace{0.1in}
{\bf System setting.} When tensors are partitioned,
workers must communicate with each other to fetch the data needed for
computation. The amount of bytes transferred divided by the computation time
forms a lower bound of the communication bandwidth required to achieve
competitive performance. For training very large DNNs on fast GPUs, the aggregate bandwidth
required far exceeds the network bandwidth in deployed GPU clusters (e.g.
Amazon's EC2 GPU instances have only 25Gbps aggregate bandwidth).  Thus, for
our implementation and evaluation, we target a single machine with multiple GPU
devices.

\section{Challenges and our approach}
\label{s:challenges}

\begin{figure}
\centering
\lstset{%
	basicstyle=\scriptsize\ttfamily,%
	emphstyle={\color{blue}\bfseries},
	commentstyle={\color{teal}\itshape},
	keywords={lambda,def,return,for,in},
}%
\begin{lstlisting}
def conv1d(data, filters):
 for b in range(output.shape[0]): #b is batch dimension
   for co in range(output.shape[1]): #co is output channel
    for x in range(output.shape[2]): #x is output pixel
     for ci in range(filters.shape[0]): #di is input channel
      for dx in range(filters.shape[2]): #dx is filter window
       output[b, co, x] += data[b, ci, x+dx] 
                           * filters[ci, co, dx]
\end{lstlisting}
\caption{The naive implementation of \code{conv1d} in Python.}\label{fig:conv1d}
\end{figure}



In order to partition a dataflow graph of operators, we must tackle the two challenges
mentioned in~\secref{s:intro}.
We discuss
these two challenges in details and explain at a high level how \name solves
them.

\subsection{How to partition a single operator?}
To make the problem of automatic partitioning tractable, we consider only a restricted
parallelization pattern, which we call ``{\bf \pnr}''.  
Suppose operator $c$ computes output tensor $O$.  Under \pnr, $c$ can be
parallelized across two workers by executing the {\em same} operator on each
worker using smaller inputs.  The final output tensor $O$ can be obtained
from the output tensors of both workers ($O_1$, and $O_2$) in one of the two ways.
1) $O$ is the concatenation of $O_1$ and $O_2$ along some dimension.  2) $O$
is the element-wise reduction of $O_1$ and $O_2$.  \Pnr is crucial for
automatic parallelization because it allows an operator's existing single-GPU 
implementation to be re-used for parallel execution.  Such implementation often
belongs to a highly optimized closed-source library (e.g.  cuBLAS, cuDNN).  

\Pnr is not universally applicable, e.g. Cholesky~\cite{madlinq} cannot be
parallelized this way.  Nor is \pnr optimal.  One can achieve more efficient
communication with specialized parallel algorithms (e.g. Cannon's
algorithm~\cite{cannon} for matrix multiplication) than with \pnr.
Nevertheless, the vast majority of operators can be parallelized using \pnr
(\secref{ss:tdl}) and have good performance.

Tensors used in DNNs have many dimensions so there are
potentially many different ways to parallelize an operator.
Figure~\ref{fig:conv1d} shows an example
operator, \code{conv1d}, which computes 1-D convolution over \code{data} using
\code{filters}.   The 3-D \code{data} tensor
contains a batch (\code{b}) of 1-D pixels with \code{ci} input channels.
The 3-D \code{filters} tensor contains a convolution
window for each pair of \code{ci} input and \code{co} output channel.
The 3-D \code{output} tensor contains the convolved
pixels for the batch of data on all output channels.  

There are many ways to parallelize \code{conv1d} using \pnr;
Figure~\ref{fig:parconv1d} shows two of them.  In
Figure~\ref{fig:parconv1d}(a), the final output is a concatenation (along the
\code{b} dimension) of output tensors computed by each worker.  Each worker
reads the entire \code{filters} tensor and half of the \code{data} tensor. In
Figure~\ref{fig:parconv1d}(b), the final output is a reduction (sum) of each
worker's output.  Figure~\ref{fig:conv1d} shows what input tensor region each
work reads from. If tensors are partitioned, workers must perform remote data
fetch.


\begin{figure}
\centering
\includegraphics[scale=0.5]{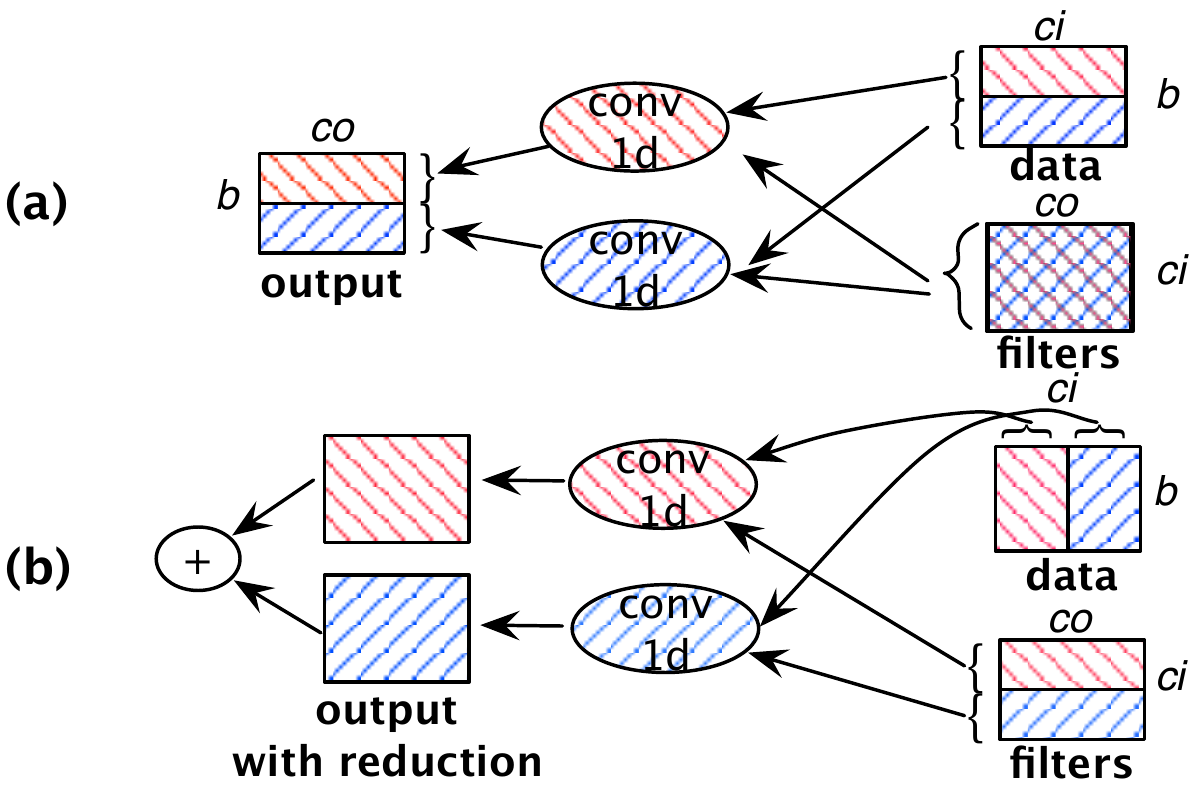}
\caption{Two of several ways to parallelize \code{conv1d} according to \pnr.
Each 3D tensor is represented as a 2D matrix of vectors.
Different stripe patterns show the input tensor regions required by different workers.}
\label{fig:parconv1d}
\end{figure}

Prior work~\cite{tetris, jia2018exploring, jia2018beyond} manually 
discovers the partition strategies for a few common DNN layers.
Some~\cite{jia2018exploring, jia2018beyond} have ignored the strategy that uses
output reduction (i.e.  Figure~\ref{fig:parconv1d}(b)), which we show to have
performance benefits later (\secref{ss:eval-partition}).  Manual discovery is 
tedious for a dataflow system with a large number of operators (341 and 139 in
TensorFlow and MXNet respectively). Can one support automatic discovery instead? 

\vspace{0.1in}
{\bf Our approach.}  \name analyzes the access pattern of an operator to 
determine all viable partition strategies. As such, we require the
developer of operators to provide a succinct description of what each operator
computes in a light-weight language called TDL (short for Tensor Description
Language).  An operator's TDL description is separate from its implementation.  The
description specifies at a high-level how the output tensor is derived from its
inputs, without any concern for algorithmic or architectural optimization,
which are handled by the operator's implementation.  We can statically analyze
an operator's TDL description to determine how to partition it along different
dimensions. \secref{s:descr} describes this part of \name's design in
details.

\subsection{How to optimize partitioning for a graph?}
As each operator has several partition strategies, there are combinatorially
many choices to partition each tensor in the dataflow graph, each of which has
different execution time and per-GPU memory consumption.

It is a NP-hard problem to partition a general dataflow graph for optimal
performance~\cite{kennedy1998automatic, kremer1993np, li1990index, li1991data}.
Existing proposals use greedy or dynamic-programming algorithm to optimize a mostly linear
graph of layers~\cite{tetris, jia2018exploring},
or perform stochastic searches~\cite{jia2018beyond, mirhoseini2017device, placement:iclr18}
for general graphs.  The former approach is faster, but still impractical when applied 
on fine-grained dataflow graphs. In particular, its running time is proportional
to the number of ways an operator can be partitioned.  When there are $2^m$
GPUs, each input/output tensor of an operator can be partitioned along a
combination of any 1, 2, ..., or $m$ dimensions, thereby dramatically
increasing the number of partition strategies and exploding the search time.



\vspace{0.1in}
{\bf Our approach.} We use an existing dynamic programming (DP) algorithm~\cite{jia2018exploring} 
in our search and propose several key techniques to make it
practical.  First, we leverage the unique characteristics of DNN computation to
``coarsen'' the dataflow graph and shrink the search space.  These include
grouping the forward and backward operations, and coalescing element-wise or
unrolled operators.  Second, to avoid blowing up the search space in the face
of many GPUs, we apply the basic search algorithm recursively.  In
each recursive step, the DP algorithm only needs to partition
each tensor in the coarsened graph among two ``groups'' (of GPUs).  
\secref{s:dataflow} describes this part of \name's design in details.

\section{Partitioning a single operator}
\label{s:descr}

\begin{figure}[t]
\centering
\lstset{%
	basicstyle=\scriptsize\ttfamily,%
	emphstyle={\color{blue}\bfseries},
	keywords={lambda,def,return},
}%
\begin{lstlisting}
@tofu.op
def conv1d(data, filters):
  return lambda b, co, x:
    Sum(lambda ci, dx: data[b, ci, x+dx]*filters[ci, co, dx])

@tofu.op
def batch_cholesky(batch_mat):
  Cholesky = tofu.Opaque()
  return lambda b, i, j: Cholesky(batch_mat[b, :, :])[i,j]
\end{lstlisting}
\caption{Example TDL descriptions.}\label{fig:conv1dtdl}
\end{figure}

This section describes TDL (\secref{ss:tdl}) and its analysis (\secref{ss:region}).  

\subsection{Describing an operator}\label{ss:tdl}
Our Tensor Description Language (TDL) is inspired by
Halide\cite{halide}.  The core idea is ``\emph{tensor-as-a-lambda}'', i.e.
we represent tensors as lambda functions that map from coordinates (aka index
variables) to values, expressed as a TDL expression.  TDL expressions are
side-effect free and include the following:

\begin{itemize}[leftmargin=*]
\setlength\itemsep{0em}
\item Index variables (i.e. arguments of the lambda function).
\item Tensor elements (e.g. \code{filters[ci, co, dx]}).
\item Arithmetic operations involving constants, index variables, tensor elements or TDL expressions.
\item Reduction over a tensor along one or more dimensions.
\end{itemize}

%
%
\emph{Reducers} are commutative and associative functions that aggregate 
elements of a tensor along one or more dimensions.  \name supports 
\code{Sum}, \code{Max}, \code{Min} and \code{Prod} as built-in reducers.  It
is possible to let programmers define custom reducers,  but 
we have not encountered the need to do so.



We implemented TDL as a DSL using Python.  As an example,
Figure~\ref{fig:conv1dtdl} shows the description of \code{conv1d}, whose output
is a 3D tensor defined by \code{\kw{lambda} b, co, x: ...} Each element of the
output tensor is the result of reduction (\code{Sum}) over an internal 2D
tensor (\code{\kw{lambda} ci, dx: ...}) over both \code{ci} and \code{dx}
dimensions. 

{\bf Opaque function.} \textred{We have deliberately designed TDL to be simple and not
Turing-complete. For example, 
TDL does not support loops or recursion, and thus cannot express sophisticated 
computation such as Cholesky decomposition. In such cases, we represent
the computation as an \emph{opaque function}.}  Sometimes, such an operator has
a batched-version that can be partitioned along the batch
dimension. Figure~\ref{fig:conv1dtdl} shows the TDL description of the
operator \code{batch\_cholesky}. The output is a 3-D tensor (\code{\kw{lambda}
b,i,j:...}) where the element at $(b,i,j)$ is defined to be the $(i,j)$ element
of the matrix obtained from performing Cholesky on the $b$-th slice of the
input tensor. Note that, \code{batch\_mat[b, :, :]} represents the
\code{b}$^{th}$ slice of the \code{batch\_mat} tensor. It is syntactic
sugar for the lambda expression \code{\kw{lambda} r, c: batch\_mat[b, r, c]}.

{\bf Describing MXNet operators in TDL.} 
Ideally, operator developers should write TDL descriptions.  As \name is
meant to work with an existing dataflow system (MXNet), we have written
the descriptions ourselves as a way to bootstrap.
We found that TDL can describe 134 out of 139 MXNet v0.11 operators. Out of these,
77 are simple element-wise operators; 2 use the opaque function primitive, and
\textred{11 have output reductions}.
It takes one of the authors one day to write all these descriptions; most of them
have fewer than three LoC.
\textred{Although we did not build \name's prototype for TensorFlow, we did
investigate how well TDL can express TensorFlow operators.  We found that TDL
can describe 257 out of 341 TensorFlow operators. Out of these, 140 are
element-wise operators; 22 use the opaque function.  For those operators that
cannot be described by TDL, they belong to three categories: sparse tensor
manipulations, operators with dynamic output shapes and operators requiring
data-dependent indexing. MXNet has no operators in the latter two categories.}


{\bf TDL vs. other Halide-inspired language.} Concurrent with our work, 
TVM~\cite{tvm:osdi18} and TC~\cite{tc:fb} are two other Halide-inspired DSLs.
Compared to these DSLs, TDL is designed for a different purpose. Specifically, we
use TDL to analyze an operator's partition strategies while TVM and TC
are designed for code generation to different hardware platforms. The different usage scenarios lead to two design differences.
First, TDL does not require users to write intricate execution schedules -- code for
describing how to perform loop transformation, caching, and mapping to
hardwares, etc.  
Second, TDL supports opaque functions that let users elide certain details
of the computation that are not crucial for analyzing how the operator can be
partitioned.  


\subsection{Analyzing TDL Descriptions}
\label{ss:region}

\name analyzes the TDL description of an operator to discover its basic
partition strategies.  A basic partition strategy parallelizes an operator for
2 workers only.  Our search algorithm uses basic strategies recursively to
optimize partitioning for more than two workers (\secref{ss:dataflow-recursive}).

A partition strategy can be specified by describing the input tensor
regions required by each worker to perform its ``share'' of the
computation.  This information is used later by our search algorithm to
optimize partitioning for the dataflow graph and to generate the partitioned
graph in which required data is fetched from different workers.

Obtaining input regions from a TDL description is straightforward
if tensor shapes are known.
For example, consider the following simple description:
\begin{lstlisting}
def shift_two(A): B = lambda i : A[i+2]; return B
\end{lstlisting}
Suppose we want to partition along output dimension \code{i}. Given \code{i}'s
concrete range, say $[0,9]$,  we can compute that the worker needs \code{A}'s
data over range $[2,6]$ (or $[7,11]$) in order to compute \code{B} over range
$[0,4]$ (or $[5,9]$).  

Analyzing with concrete ranges is hugely
inefficient as a dataflow graph can contain thousands of operators, many of
which are identical except for their tensor shapes (aka index ranges).
Therefore, we perform TDL analysis in the 
abstract domain using \emph{symbolic interval analysis}, a technique
previously used for program variable analysis\cite{venet2012gauge},
boundary checking\cite{rugina2000symbolic}, parameter validation\cite{wu2014abstract}.


{\bf Symbolic interval analysis.} Suppose the 
output tensor of an operator has $n$ dimensions and is of the form \code{lambda x1, ..., xn : ...}.
We consider the range of index variable \code{xi} to be [0,
$\mathcal{X}_i$], where $\mathcal{X}_i$ is a symbolic upper bound.  We then
symbolically execute the lambda function to calculate the symbolic intervals
indicating the range of access on the operator's input tensors. 

Symbolic execution should keep the range as precise as possible.  To
do so, we represent symbolic interval ($\mathcal{I}$) as an affine
transformation of all symbolic upper bounds,
\begin{equation} \label{eq:i}
	\mathcal{I}\triangleq[\Sigma_i l_{i}\mathcal{X}_i+c,~\Sigma_i u_{i}\mathcal{X}_i+c], l_{i},u_{i},c\in\mathbb{R}
\end{equation}
In equation~\ref{eq:i}, $l_{i}$, $u_{i}$ and $c$ are some constants.  Thus, we can represent
$\mathcal{I}$ as a vector of $2*n+1$ real values $\langle
l_{1},~...,~l_{n},u_{1},...,u_{n},~c\rangle$.  Let $\mathtt{ZV}[u_i=a]$ denote
a vector of all 0s except for the position corresponding to $u_i$ which has value $a$.  By default, lambda variable
\code{xi} for dimension $i$ is initialized to $\mathtt{ZV}[u_i=1]$.

\begin{figure}
\[\arraycolsep=1.4pt
\begin{array}{rcl}
\multicolumn{3}{c}{\text{TDL description: \code{{\color{blue} lambda} x1, ..., xi, ..., xn: ...}}} \\
\mathcal{I} &\triangleq& \langle l_{1},~...,~l_{n},~u_{1},~...,~u_{n},~c \rangle \\
\mathcal{I} \pm k, k\in\mathbb{R} &=& \langle l_{1},~...,~l_{n},~u_{1},~...,~u_{n},~c \pm k \rangle \\
\mathcal{I} \times k, k\in\mathbb{R} &=& \langle l_{1}k,~...,~l_{n}k,~u_{1}k,~...,~u_{n}k,~c*k \rangle \\
\mathcal{I} / k, k\in\mathbb{R} &=& \langle l_{1}/k,~...,~l_{n}/k,~u_{1}/k,~...,~u_{n}/k,~c/k \rangle \\
\mathcal{I} \pm \mathcal{I}' &=& \langle l_{1}\pm l'_{1},~...,~u_{1}\pm u'_{1},~...,~c\pm c' \rangle \\
\end{array}
\]
\caption{\name's symbolic interval arithmetic.}\label{fig:int-arith}
\end{figure}

Our representation can support affine transformation on
the intervals, as shown by the allowed interval arithmetic in
Figure~\ref{fig:int-arith}. Product or comparison between two intervals are
not supported and will raise an error. We did not encounter any 
such non-affine operations among MXNet operators.

{\bf Discover operator partition strategies.}
Using the symbolic interval analysis, we infer the input regions required
by each of the 2 workers for every partitionable dimension.  There are two cases.

{\em Case-1} corresponds to doing \pnr without the reduction step. In this case, 
each partition strategy corresponds to some output dimension.
Suppose we are to partition \code{conv1d}'s output tensor along dimension \code{b}.
We use two different initial intervals for lambda variable \code{b},
$\mathtt{ZV}[u_{\mathtt{b}}=\frac{1}{2}]$ and $\mathtt{ZV}[l_{\mathtt{b}}=\frac{1}{2}, u_{\mathtt{b}}={1}]$, in two separate analysis runs.
Each run calculates the input regions needed to compute half of the
output tensor.  The result shows that 
that each worker reads half of the
\code{data} tensor partitioned on the \code{b} dimension and all
of the \code{filter} tensor, as illustrated in Figure~\ref{fig:parconv1d}(a).
Similarly, the analysis shows how to partition the other output dimensions, \code{co} and \code{x}.
Partitioning along dimension \code{x} is commonly referred to as parallel convolution with ``halo exchange''~\cite{coates2013deep,tetris,neurocube}.


{\em Case-2} corresponds to doing \pnr with the reduction step. 
In this case, we partition along a reduction dimension. In the example of Figure~\ref{fig:conv1dtdl}, 
the reduction dimensions corresponding to \code{ci} and \code{dx} in 
\code{Sum(\kw{lambda} ci, dx: ...)}.  The analysis will determine that, when
partitioning along \code{ci}, each partially reduced tensor
will require half of the \code{data} tensor partitioned on the
second dimension and half of the \code{filter} tensor
partitioned on the first dimension, as shown in
Figure~\ref{fig:parconv1d}(b).  Similar analysis is also done
for dimension \code{dx}. Out of 47 non-element-wise MXNet operators
describable by TDL, 11 have at least one reduction
dimension.

\section{Partitioning the dataflow graph}
\label{s:dataflow}
To partition a dataflow graph, one needs to specify which partition strategy to
use for each operator.  This section describes how \name finds the best
partition plan for a dataflow graph.

Different plans result in different running time and per-worker memory
consumption, due to factors including communication, GPU kernel efficiency and
synchronization.  Finding the best plan is NP-hard for an
arbitrary dataflow graph~\cite{spartan}.  Recent work has proposed an algorithm 
based on dynamic programming (DP) for partitioning a certain type of graphs.
\secref{ss:dataflow-coarsen} presents techniques to make
a dataflow graph applicable to DP, and \secref{ss:dataflow-recursive} improves
search time via recursion.

%
%


{\bf Optimization goal.} \REV{revision plan Sec.5 bullet 2}
\textred{Ideally, our optimization goal should consider both the end-to-end execution time of the partitioned dataflow graph and the per-worker memory consumption.  Unfortunately, neither metric 
can be optimized perfectly.  Prior work~\cite{jia2018beyond} optimizes the approximate
end-to-end execution time by minimizing the sum of total GPU kernel execution
time and total data transfer time.  

In \name, we choose to minimize the total communication cost based on two
observations.  First, the GPU kernels for very large DNN models
process large tensors and thus have similar execution time no matter which
dimension its input/output tensors are partitioned on.  Consequently, a
partition plan with lower communication cost tends to result in lower
end-to-end execution time.
Second, the memory consumed at each GPU worker is used in two areas: (1) 
for storing a worker's share of tensor data, (2) for buffering data for communication
between GPUs. The memory consumed for (1) is the same for every partition plan:
for $k$ GPUs, it is always $1/k$ times the memory required to run the dataflow
graph on one GPU. The memory consumed for (2) is proportional to the amount of
communication.  Therefore, a partition plan with lower communication cost
results in a smaller per-worker memory footprint. 
}

\subsection{Graph coarsening}
\label{ss:dataflow-coarsen}
\begin{figure}
	\centering
	\includegraphics[scale=0.60]{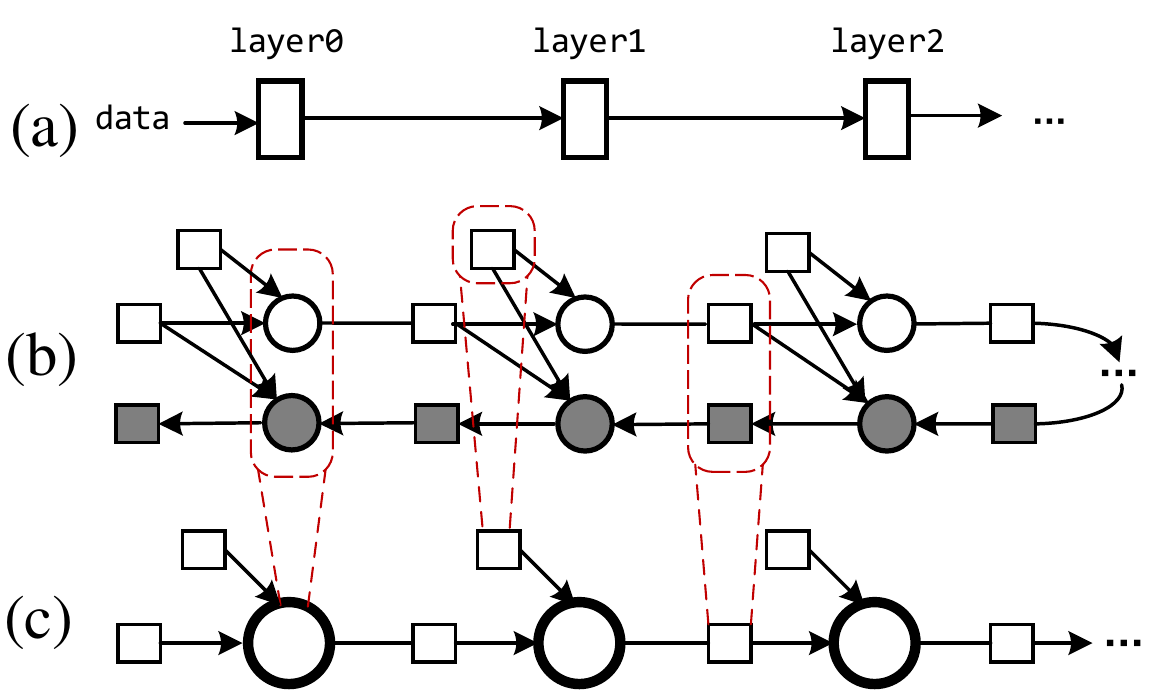}
	\caption{(a) Layer graph of a MLP model. (b) Its dataflow graph including forward
		and backward computation (in grey).
		(c) Coarsened graph. For cleanness, we only illustrate one operator group, one group for activation tensors and one group for weight tensor (dashed lines).}\label{fig:dataflow-fold}
\end{figure}

The algorithm in~\cite{jia2018exploring} is only applicable for linear graphs\footnote{We say a
graph $G$ is linear if it is homeomorphic to a chain graph $G'$, meaning there
exists a graph isomorphism from some subdivision of $G$ to some subdivision of
$G'$~\cite{homeomorphic}. Note that a ``fork-join'' style graph is linear by
this definition.}, such as 
the graph of DNN layers shown in Figure~\ref{fig:dataflow-fold}(a).  Dataflow
graphs of fine-grained operators are usually non-linear.  For example,
Figure~\ref{fig:dataflow-fold}(b) is the non-linear dataflow graph of the same DNN
represented by Figure~\ref{fig:dataflow-fold}(a).  Here, we propose to
``coarsen'' a dataflow graph into a linear one by grouping or coalescing
multiple operators or tensors.  

%

{\bf Grouping forward and backward operations.} Almost all DNN
models are trained using gradient-based optimization method. The training
includes a user-written forward propagation phase to compute the loss function
and a system-generated backward propagation phase to compute the gradients
using the chain rule. Thus, we coarsen as follows:
\begin{itemize}[leftmargin=*]
	\setlength\itemsep{0em}
	\item Each forward operator (introduced by the user) and its auto-generated backward operators
	(could be more than one) to form a group.
	\item Each forward tensor (e.g. weight or intermediate
	tensors) and its gradient tensor form a group.
		If a (weight) tensor is used by multiple operators during 
	forward propagation and thus has multiple gradients
	computed during backward propagation, the chain rule requires them to be summed
	up and the summation operator is added to the group as well.
\end{itemize}
Figure~\ref{fig:dataflow-fold}(c) shows the coarsened dataflow graph
for a MLP model. As forward and backward operators for the same layer are grouped
together,  the resulting graph becomes isomorphic to the forward dataflow
graph. For MLPs and CNNs, their coarsened graphs become linear. 
We perform the DP-based algorithm~\cite{jia2018exploring} on the coarsened graph. When the algorithm 
adds a group in its next DP step, we perform a brute-force combinatorial
search among all member operators/tensors within the group to find the minimal
cost for adding the group.  \REV{revision plan Sec.5, bullet 1}\textred{This allows tensors 
involved in the forward and backward
operators to be partitioned differently, while \cite{jia2018exploring} 
forces them to share the same partition configurations.} 
As there are only a few operators (typically 2) in
each group, the cost of combinatorial search is very low.

{\bf Coalescing operators.}
In DNN training, it makes sense for some operators to share the same partition
strategy.  These operators can be merged into one in the coarsened dataflow graph.
There are two cases:
\begin{itemize}[leftmargin=*]
\setlength\itemsep{0em}
\item {\em Merging consecutive element-wise operators}, because the input and output tensors
of an element-wise operator should always be partitioned identically.
We analyze the TDL description to determine if an operator is element-wise.
Consecutive element-wise operators are very common in DNN training. For instance,
almost all gradient-based optimizers (e.g. SGD, Adam, etc.) are composed of only element-wise operators.
\item {\em Merging unrolled timesteps}. Recurrent neural networks (RNNs) 
process a variable sequence of token over multiple timesteps. RNN has the key property that different time steps
share the same computation logic and weight tensors. Thus, they should be coalesced to share the same partition strategy. 
As a result, the dataflow graph of a multi-layer RNN becomes a chain of
coalesced and grouped operators. To detect operators that belong to different timesteps of the same
computation, we utilize how RNN is programmed in DNN frameworks. 
For example, systems like MXNet and PyTorch
call a built-in function to unroll a basic unit of RNN computation into many timesteps, allowing 
\name to detect and merge timesteps.
\end{itemize}

%
%
%

\subsection{Recursive partitioning}
\label{ss:dataflow-recursive}

\begin{figure*}[t]
	\centering
	\includegraphics[scale=0.53]{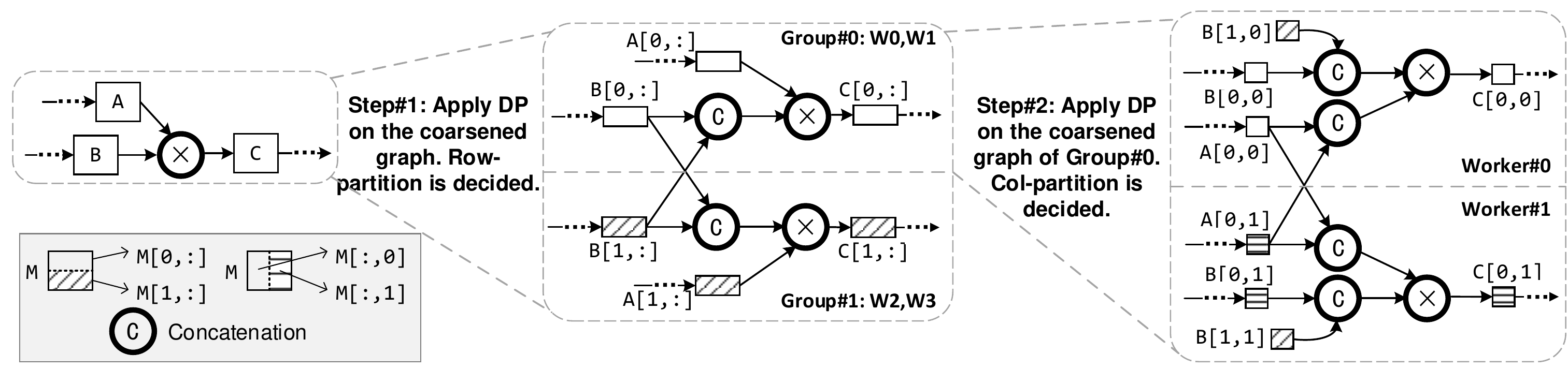}
	\caption{Recursively partition a dataflow graph to four workers.
		Only one matrix	multiplication is drawn for cleanness. In step\#1, every matrix is partitioned
		by row, and for group\#0, \code{B[1,:]} is fetched from the other group. Because of this,
		\code{B[1,:]} becomes an extra input in step\#2 when the graph is further partitioned to two
		workers. Because step\#2 decides to partition every matrix by column, every matrix is partitioned
		into a 2x2 grid, with each worker computes one block.}
	\label{fig:dataflow-gen}
\end{figure*}

\begin{table}[t]
	\centering
	\footnotesize
	\begin{tabularx}{\columnwidth}{|>{\centering}X|X|X|}
		\hline
		 & \multicolumn{2}{c|}{\textbf{Search Time}} \\
		\cline{2-3}
		 & WResNet-152 & RNN-10 \\
		\hline
		Original DP~\cite{jia2018exploring} & n/a & n/a \\
		\hline
		DP with coarsening & 8 hours & >24 hours \\
		\hline
		Using recursion & 8.3 seconds & 66.6 seconds \\
		\hline
	\end{tabularx}
	\caption{Time to search for the best partition for 8 workers. WRestNet-152 and RNN-10 are two large DNN models described in \secref{sec:eval}.}
	\label{tbl:improve}
	\vspace{-10pt}
\end{table}

When there are more than two workers, each operator can be partitioned along multiple dimensions.
This drastically increases the number of partition strategies available to each
operator and explodes the running time of the DP-based search algorithm.
To see this, consider the
coarsened graph of Figure~\ref{fig:dataflow-fold}(b). Every operator group
has two input tensor groups and one output tensor group. Each tensor group contains
one forward tensor and one gradient tensor. At each step, the DP
algorithm needs to consider all the possible configurations of an operator
group including different ways to partition the six input/output tensors. 
For each 4D tensor used in 2D-convolution, there are in total
20 different ways to partition it evenly across 8 workers. Hence, the number of
possible configurations of 2D-convolution's operator group 
is $20^6=6.4\times 10^7$. Although not all the dimensions are available
for partition in practice (e.g. the convolution kernel dimension is usually very small)
, the massive search space still results in $8$ hours of search time when partitioning the 
WResNet-152 model (Table~\ref{tbl:improve}). 

Our insight is that the basic DP search algorithm can be \emph{recursively} applied. For
instance, a matrix, after being first partitioned by row, can be partitioned again.
If the second partition is by column, the matrix is partitioned into a 2$\times$2
grid; if the second partition is by row, the matrix is partitioned into four parts
along the row dimension.

This observation inspires our recursive optimization
algorithm to handle $k=2^m$ GPUs:

\begin{enumerate}[leftmargin=*]
	\setlength\itemsep{0em}
\item Given a dataflow graph $G$, run the DP algorithm with coarsening 
to partition $G$ for two worker groups, each consisting of $2^{m-1}$ workers. 
Note that each tensor is only partitioned along \emph{one} dimension. 
\item Consider the partitioned dataflow graph as consisting of two halves:
$G_0$ for worker group\#0 and $G_1$ for worker group\#1. Each half also contains the data fetched
from the other group as extra input tensors.
\item Repeat step 1 on $G_0$ and apply the partition result to $G_1$ until there is only one worker per group.
\end{enumerate}
This recursive algorithm naturally supports partitioning along multiple dimensions.
Figure~\ref{fig:dataflow-gen} illustrates two recursive steps using an example dataflow graph (for brevity, we 
only show one matrix multiplication operator in the graph). Note the recursion must be 
done over the entire dataflow graph instead of a single operator, as the
partition plan of the previous recursive step will influence the global
decision of the current one. 

\textred{While the recursive algorithm may seems straightforward, it is less obvious why the 
resulting partition plan has the optimal overall
communication cost.  In particular, the recursive algorithm chooses a 
sequence of basic partition plans $\{\mathcal{P}_1, \mathcal{P}_2, ...
\mathcal{P}_m \}$ in $m$ recursive steps, and we need to prove that no other sequence of choices leads
to a better plan with a smaller communication cost.}
The main insight of our proof is that the partition plan decided in each recursive step 
is commutative (i.e,
choosing partition plan $\mathcal{P}$ followed by $\mathcal{P}'$ results in the same
total communication cost as choosing $\mathcal{P}'$ followed by $\mathcal{P}$.) Based on this insight, we
derive the following property and use it to prove optimality.

\begin{thm:greedy}
	Let the total communication cost incurred by all worker groups at step $i$ be $\delta_i$. Then $\delta_i\leq \delta_{i+1}$.
\end{thm:greedy}
Suppose 
$\{\mathcal{P}_1, \mathcal{P}_2, ... \mathcal{P}_m \}$ is the sequence of partition plans chosen and it is not optimal. 
Then there exists a different sequence
$\{\mathcal{P}'_1, \mathcal{P}'_2, ... \mathcal{P}'_m \}$
with smaller total cost.  Hence, there must
be two consecutive steps $k-1$ and $k$, such that $\delta_{k-1}\leq\delta'_{k-1}$ and
$\delta'_{k}<\delta_{k}$.  We can show that, by choosing $\mathcal{P}'_k$
instead of $\mathcal{P}_k$ at step $k$, the search could have produced a better
partition plan.  This contradicts the optimality of the DP algorithm. See appendix
for the full proof.




If the number of GPUs $k$ is not a power of two, we factorize it to $k = k_1*k_2*...*k_m$, where 
$k_i\geq k_{i+1}$ for all $i$.  At each step $i$ in the recursive algorithm, 
we partition the
dataflow graph into $k_i$ workers in which
each partition strategy still partitions a tensor along only one dimension but across $k_i$
workers. 


\vspace{0.1in}
{\bf The benefits of recursion.}
Recursion dramatically cuts down the search time by partitioning along only one
dimension at each step. For example, the number of configurations
to be enumerated at each step for a 2D-convolution operator group is only
$4^6=4096$.  Therefore, the total number of partition strategies searched for the 
2D-convolution operator with 8 workers (3 recursive
steps) is $3*4096$, which is far fewer than $20^6$ when recursion is not used.
Table~\ref{tbl:improve} shows the search time for two common large DNN models
when applying the original DP algorithm on coarsened graph without and with
recursion.  

As another important benefit, recursion finds partition plans that 
work well with common hierarchical physical interconnects which have 
less aggregate bandwidth near the top of the hierarchy.  For example, many
commercial servers group GPUs by faster PCI-e buses first and then connect the
groups with slower QPI buses or Infinibands. As theorem~\ref{thm:greedy}
indicates, \name assigns worker groups with less communication near the top of
the hierarchical interconnects in earlier steps of the recursion.

\section{Optimizations in generating the partitioned graph}
\label{ss:dataflow-gen}

\begin{figure}[t]
	\centering
	\includegraphics[scale=0.8]{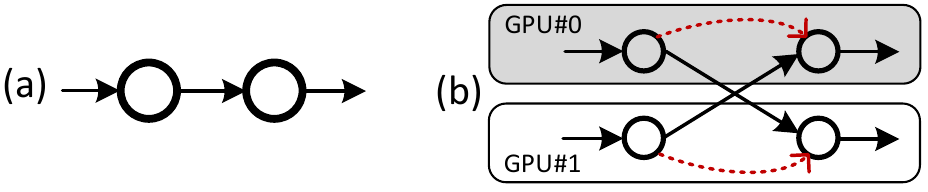}
	\caption{(a) Original dataflow graph; (b) Partitioned graph with extra control dependencies (dashed lines).}\label{fig:rewire}
\end{figure}

Once the search algorithm determines how to partition for every tensor and
operator, \name generates a corresponding partitioned dataflow graph.  The
graph generation process is mostly straightforward save for two optimizations,
which are crucial to keep the per-worker memory consumption low.

\textbf{Leveraging the existing memory planner.} Systems like MXNet and TensorFlow 
have their own memory planners to statically allocate and re-use memory buffers
among operators according to their dependencies.  Ideally, the per-worker
memory consumption for $k$ workers should be $1/k$ of the original memory
consumption.  In our initial implementation, per-worker memory consumption 
far exceeded the expected amount.  We found that this is because 
the partitioning of a dataflow graph changes the dependencies
between original operators.  Figure~\ref{fig:rewire} illustrates an example. In
the original graph, the second operator can reuse the memory buffer of the
first one (such as the workspace of a convolution operator) due to the
dependency between the two.  Naive graph generation may result in the graph with solid edges in
Figure~\ref{fig:rewire}(b), in which the two operators executed by each worker no longer have 
a direct dependency between them and thus allows no immediate memory-reuse.  To fix
this, \name maintains the original operator dependencies on each worker by
generating the extra control dependencies (dashed lines), so that the memory
planner can immediately re-use buffers across dependent operators.

\textbf{Fusing operators for remote data fetch.}
For each operator in the original graph, \name generates a copy for each GPU
worker in the partitioned graph.  Often, these operators need to fetch data from
a different worker. MXNet already supports copy, split, concatenate operators,
which can be used to support data movements.  A naively generated graph would
use split to extract the required input regions from the other workers, copy
data to the local worker, and concatenate them together to assemble the input
region needed by the operator's GPU kernel.  Extra reduce operators can also be
generated if the output tensors of different workers need to be aggregated
according to the partition strategy used. Execution of such graphs results in
many intermediate memory blocks, increasing the per-worker memory consumption.
To mitigate this, we wrote a custom GPU kernel called MultiFetch to retrieve 
remote data and assemble the input region in-place using CUDA Unified Virtual
Addressing (UVA). \REV{revision plan Sec.6 bullet 2}\textred{CUDA UVA allows a kernel running 
on one GPU to directly access the memory on another, which avoids explicit data copying before kernel execution.  Our MultiFetch kernel takes multiple pointers to the memory blocks of
the input regions from the other GPUs and assembles them in one kernel launch}.

Beyond the two optimizations described above, we also spread out the reduction workload
to all GPUs (all-reduce) when performing output reduction.
This is important for avoiding any single aggregation bottleneck.
We also find that the MXNet scheduler can execute the remote fetch operator
much earlier than required, resulting in memory being occupied for longer than
necessary. We adopt the same technique proposed by TensorFlow to delay the
execution of the remote fetch operator.

\section{Evaluation}
\label{sec:eval}
\textred{
This section evaluates \name and compares with various alternative approaches.
The highlights of our results are the following:
\begin{itemize}[leftmargin=*]
	\setlength\itemsep{0em}
\item \name can train very large WResNet and RNN models across 8 GPUs 
with high throughput that is within 60\%-98\% of a hypothetical ideal baseline.
\item Except for a few exceptions, {\name} outperforms existing alternative approaches including shrinking the mini-batch size used for training, swapping to CPU memory, and placing different operators on different GPUs.
\item \name's recursive partition algorithm leads to 
better training throughput than existing partition algorithms~\cite{jia2018exploring, spartan} and simple 
heuristics.
\item The overall partition plan found by {\name} is highly non-trivial, even though the underlying DNN 
model has a regular structure.
\end{itemize}
}
\subsection{Experimental setup}
\begin{figure*}[!pt]
	\centering
	\begin{subfigure}[t]{0.32\linewidth}
		\centering
		\includegraphics[width=\linewidth]{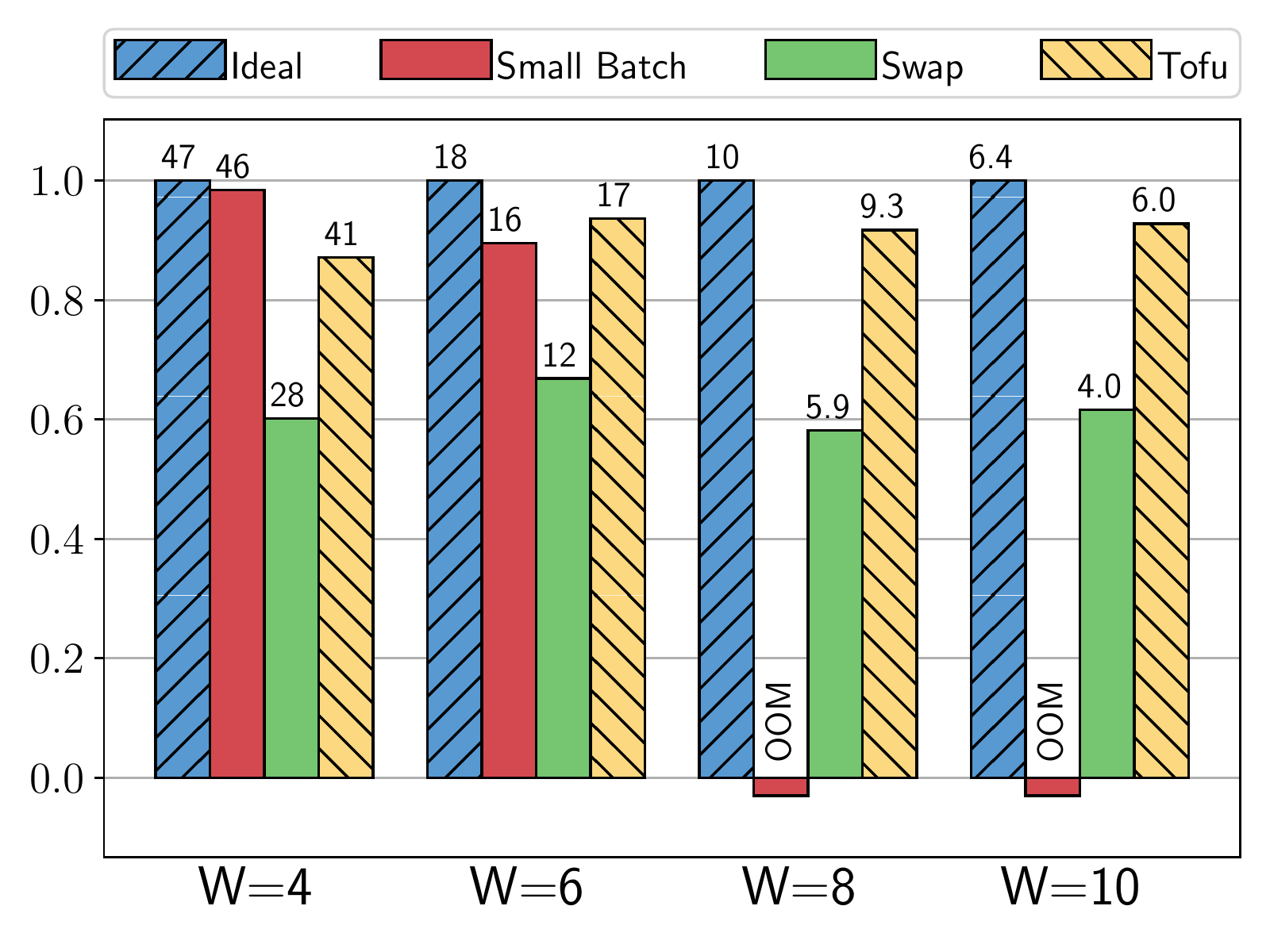}
		\caption{Wide ResNet-50}\label{fig:resnet_50}
	\end{subfigure}
	\begin{subfigure}[t]{0.32\linewidth}
		\centering
		\includegraphics[width=\linewidth]{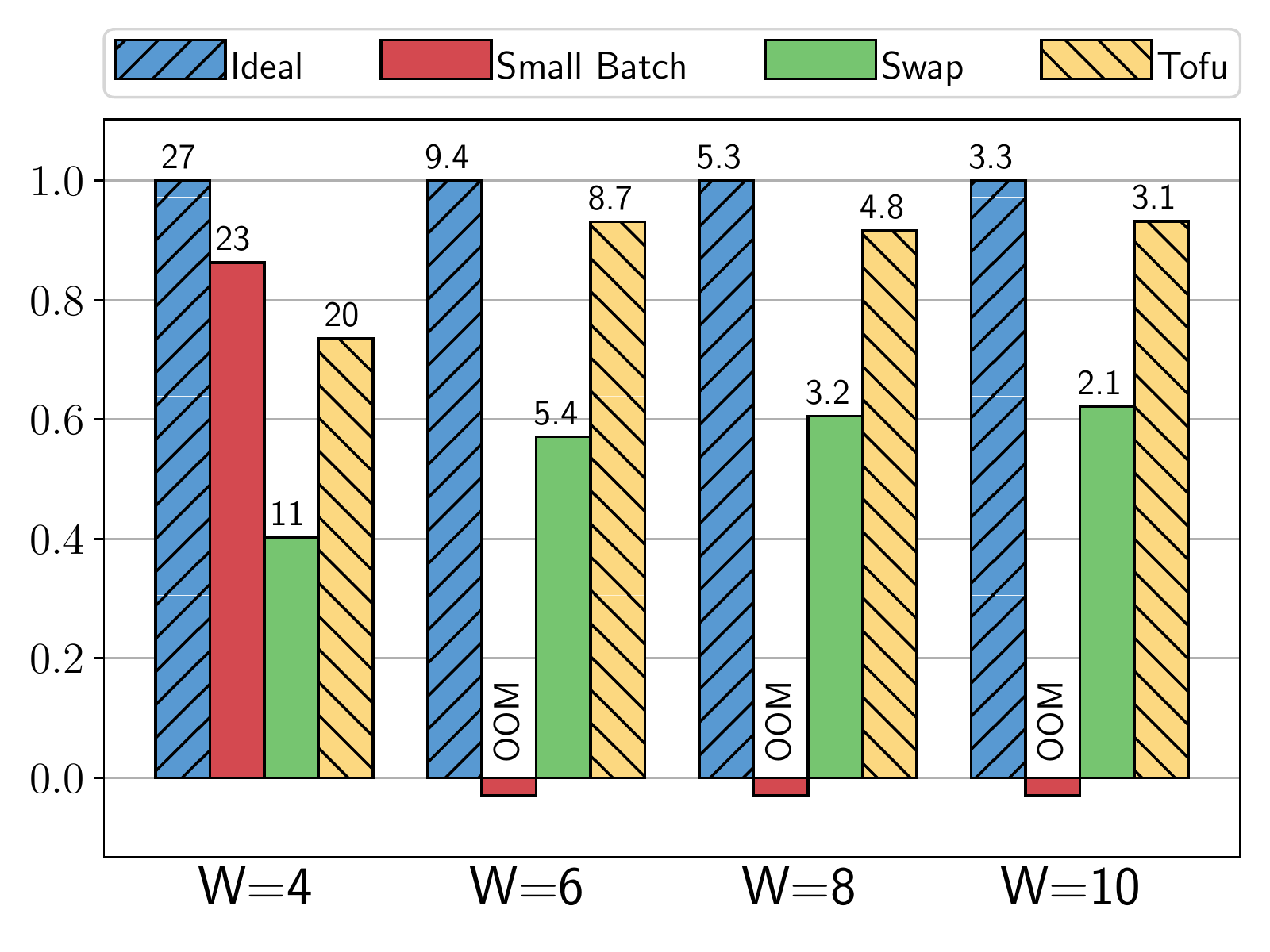}
		\caption{Wide ResNet-101}\label{fig:resnet_101}
	\end{subfigure}
	\begin{subfigure}[t]{0.32\linewidth}
		\centering
		\includegraphics[width=\linewidth]{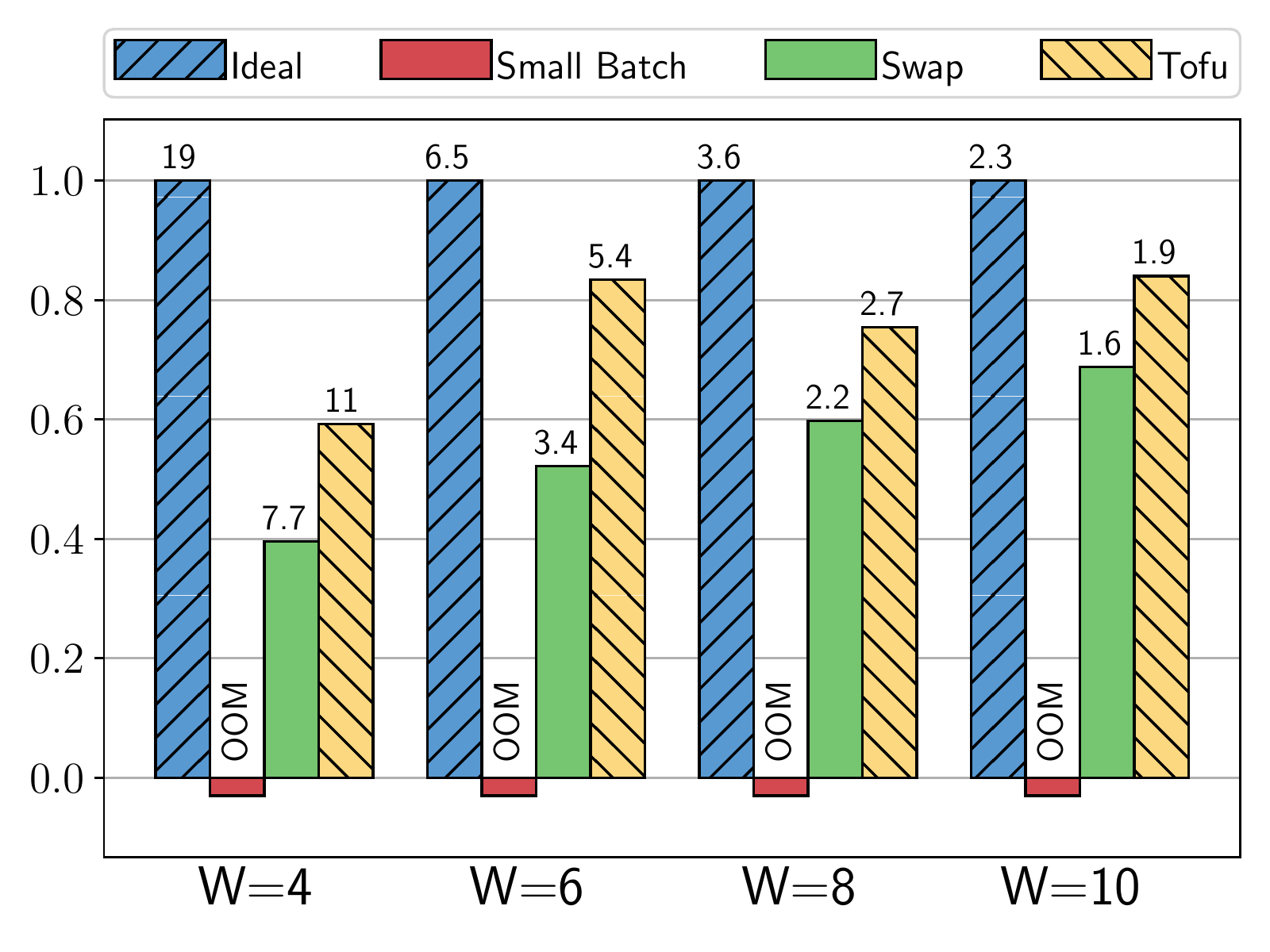}
        \caption{Wide ResNet-152}\label{fig:resnet_152}
	\end{subfigure}
	\captionsetup{aboveskip=0pt}
    \caption{Normalized \wnet throughput relative to the ideal performance.
             The number on each bar shows the absolute throughput
             in samples/sec.}
    \label{fig:resnet_throughput}
\end{figure*}


\begin{table}[t]
	\centering
	\footnotesize
	\begin{subtable}[t]{0.49\columnwidth}
	\begin{tabularx}{\linewidth}[t]{|X|X|X|X|}
		\hline
		\multicolumn{4}{|c|}{\textbf{RNN}} \\
		\hline\hline
		 & L=6 & L=8 & L=10 \\
		\hline
		H=4K             & 8.4 & 11.4 & 14.4 \\
		\hline
		H=6K             & 18.6 & 28.5 & 32.1 \\
		\hline
		H=8K             & 33.0 & 45.3 & 57.0 \\
		\hline
	\end{tabularx}
	\end{subtable}
	\begin{subtable}[t]{0.49\columnwidth}
	\begin{tabularx}{\linewidth}[t]{|X|X|X|X|}
		\hline
		\multicolumn{4}{|c|}{\textbf{Wide ResNet}} \\
		\hline\hline
		 & L=50 & L=101 & L=152 \\
		\hline
		W=4             & 4.2 & 7.8 & 10.5 \\
		\hline
		W=6             & 9.6 & 17.1 & 23.4 \\
		\hline
		W=8             & 17.1 & 30.6 & 41.7 \\
		\hline
		W=10            & 26.7 & 47.7 & 65.1 \\
		\hline
	\end{tabularx}
	\end{subtable}
	\caption{\small{Total weight tensor sizes (GB) of our benchmarks.}}
	\label{tbl:memory}
	\vspace{-10pt}
\end{table}

\noindent\textbf{Prototype Implementation.}
We implement \name based on MXNet 0.11.
The TDL components (operator
descriptions and the region analyzer) are written in Python (2K LoC).
The recursive search algorithm is implemented as a graph transformation
pass in NNVM (4K LoC in C++).  As we need information from
gradient calculation and shape inference, we also made slight modifications to
the corresponding NNVM passes.

\noindent\textbf{Testbed:} The experiments run on an EC2 p2.8xlarge instance.
The instance has 8 K80 GPUs with 12GB memory each.
GPUs are connected by PCI-e bus with 21GB/s peer-to-peer bandwidth.
It has 32 virtual CPU cores and 488GB CPU memory. The CPU-GPU bandwidth is 10GB/s.

\noindent\textbf{DNN Benchmarks:} We evaluate the WResNet~\cite{wideresnet}
convolutional neural network and recurrent neural network (RNN).  We choose
these two benchmarks because they correspond to very large models.
We do not evaluate those well-known DNNs that fit into a single GPU's memory,
such as AlexNet, VGGNet and Inception.

WResNet~\cite{wideresnet} is a widened version of the original residual network
model~\cite{resnet}. It has a widening scalar to increase the number
of channels on each convolution layer. The model size 
grows quadratically as each weight tensor is widened on both the input and
output channel. WResNet has been shown to achieve a better accuracy when the
model is widened by $10\times$. Due to the memory limitation, such improvement
is only demonstrated on CIFAR-10 dataset of small images (32x32) using a
50-layer model. We experiment with WResNet on ImageNet dataset with
images of size (224x224). We also test different model variations: widening
scalar from 4 to 10 on networks with 50, 101 and 152 layers.  We use notations
like WResNet-101-8 to denote the 101-layer ResNet model widened by 8 times.

For RNN, there are two ways to increase model capacity. The number of neurons in each
hidden layers can be increased, and multiple RNN layers
can be stacked to form a deeper model. Researchers have explored very large
RNNs by increasing the number of RNN layers to 8~\cite{mirhoseini2017device,
placement:iclr18}, or by using a large hidden layer size such as
8192~\cite{rnnlimit}.  We use the model described in~\cite{rnnlimit}, and test
it with different configurations varying from 6 to 10 layers with 4K, 6K, and
8K hidden sizes. All RNN model variants use LSTM cell~\cite{lstm} and are
unrolled for 20 steps as in~\cite{rnnlimit}. We use the RNN-8-8K to denote the
8-layer RNN model with 8K hidden size.

All the benchmarks are tested by running a full training iteration including forward/backward
propagation and weight update. State-of-the-art weight optimizers such as
Adam~\cite{adamoptimizer} and Adagrad~\cite{adagrad} must maintain an
extra buffer for storing the gradient history. Therefore, a model of weight size
$W$ needs to consume at least $3W$ size of memory for storing the weight, gradient
and the history tensors.  Table~\ref{tbl:memory} shows the total weight memory
consumption for all the benchmarks.


\vspace{0.1in}
\noindent\textbf{Baseline and Alternatives for Comparison.} 
We consider an ideal baseline and several alternative approaches for comparison.

\noindent\emph{Ideal} is a hypothetical baseline that assumes each GPU has
infinite memory. We simulate this by modifying the memory allocator of MXNet to
always return the same memory block.  We measure the single-GPU
throughput number and multiply it by 8 as the performance of running on 8 GPUs. 

\noindent\emph{SmallBatch} is a baseline that tries to fit the model in a
single GPU by reducing the mini-batch size. Like the {\em ideal} baseline, we scale the single-GPU
throughput number by 8 for 8 GPUs. Thus, neither SmallBatch nor Ideal baseline 
consider the communication cost and represent performance upper-bounds.

\noindent\emph{Swapping}~\cite{meng2017training,sekiyama2018profile,rhu2016vdnn}
is a baseline that swaps in/out GPU memory blocks to CPU.
There are many ways to design the swapping policy. Our baseline combines
many of these techniques in order for a fair comparison. First, our baseline
follows the design of~\cite{rhu2016vdnn}, which includes
a least recently used algorithm to decide the tensor to be swapped out
and a prefetching unit based on the execution. This supports swapping in/out
any memory block instead of only activation tensors as in~\cite{meng2017training}.
Second, read-only tensors are copied to CPU only once and simply dropped the next
time they are to be swapped out.
Third, we combine dataflow analysis similar to~\cite{meng2017training} to disable
swapping out memory blocks that will soon be used.

\noindent\emph{Operator Placement}
\cite{sutskever2014sequence,mirhoseini2017device,shazeer2017outrageously,google:nmt}
assigns operators to different devices to spread out memory usage.
For RNN, this baseline assigns the computation of different layers
to different GPUs to leverage the pipelining effect,
as it is originally proposed in \cite{sutskever2014sequence}. If there are more
layers than the number of GPUs, we balance the assignment in a round-robin manner.
Operator placement does not perform well for CNNs due the mostly serial
layer-by-layer execution. Therefore, we skip this baseline for all \wnet
benchmarks.

In our experiments, the ideal baseline uses a batch size that can saturate the
GPU for the best performance. SmallBatch, Swapping and Tofu all use the largest batch size
that make the execution fit in the GPU memory.


\subsection{Training Large and Deep Models}
\label{sec:throughput}

We show the performance of \name and compare it to the ideal baseline and alternatives.
\textred{Since different systems use different batch sizes to achieve
the best performance, we use throughput (samples/sec) instead of training time per iteration 
as the metric for comparison
}\REV{reviewer E, paragraph 3}
In Figures~\ref{fig:resnet_throughput} and~\ref{fig:rnn_throughput},
each bar shows the throughput relative to the ideal baseline performance.
The absolute throughput numbers are shown on top of
each bar. {\em OOM} indicates out-of-memory error.

{\bf \wnet Performance.}
Figure~\ref{fig:resnet_throughput} shows the \wnet throughput achieved by
different systems.  The ideal baseline uses a global batch size of 128. Only 3
models, \wnet-50-4,6 and \wnet-101-4 can be fit in a single GPU memory by
shrinking the batch size (aka SmallBatch).

\name can achieve 60\%-95\% of the ideal performance for all the models.
\textred{The largest model, \wnet-152, has the biggest performance gap. This is 
because we configured the ideal baseline to use a much larger mini-batch size for 
peak throughput without any consideration for memory consumption.
For example, the ideal baseline uses base size
128 for \wnet-152-4 while \name can fit at most 32.}\REV{revision plan Sec.7 bullet 2}
The batch sizes used by \name ranges from 8 (for \wnet-152-10) to 128 (for \wnet-50-4).
\name performs better than alternatives 
in all scenarios except for \wnet-50-4 and \wnet-101-4, in which 
SmallBatch achieves 12\% and 15\% better throughput than \name.  This is
because convolution kernels have good GPU utilization even for small batch
sizes.  However, SmallBatch runs out of memory for most of the models
in Figure~\ref{fig:resnet_throughput}. 

As shown in Figure~\ref{fig:resnet_throughput}, swapping is 20\%-63\% slower than \name across all the models.  This is due to swapping's much larger communication amount.
Although we implemented prefetching to ``hide'' communication latency in swapping,
the CPU-GPU communication is the bottleneck as all 8 GPUs share the same bandwidth to
communicate with the CPU. 


\begin{figure*}[t!p]
	\centering
	\begin{subfigure}[t]{0.32\textwidth}
		\centering
		\includegraphics[width=\linewidth]{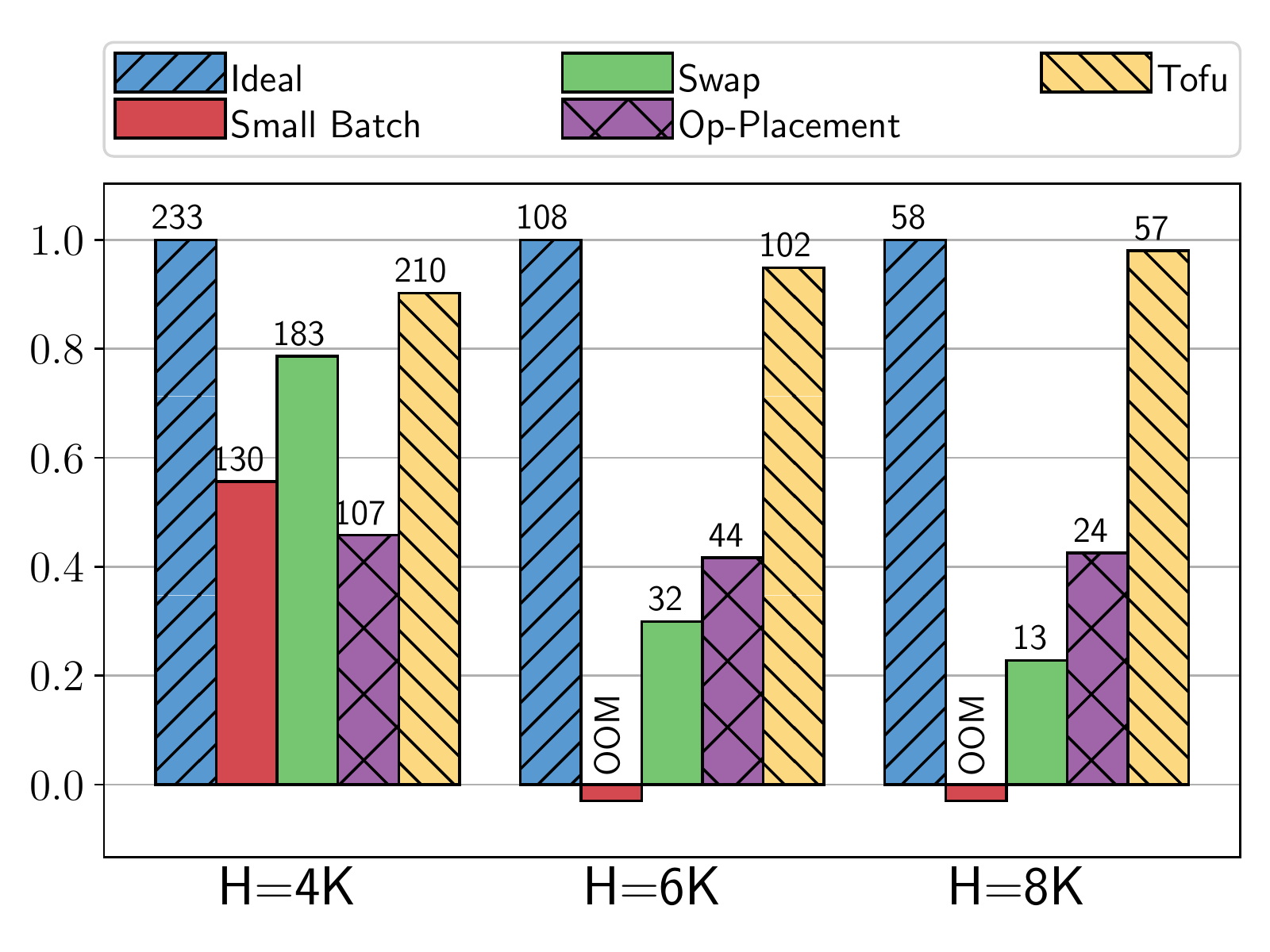}
		\caption{6 layers RNN}\label{fig:rnn_6}
	\end{subfigure}
	\begin{subfigure}[t]{0.32\textwidth}
		\centering
		\includegraphics[width=\linewidth]{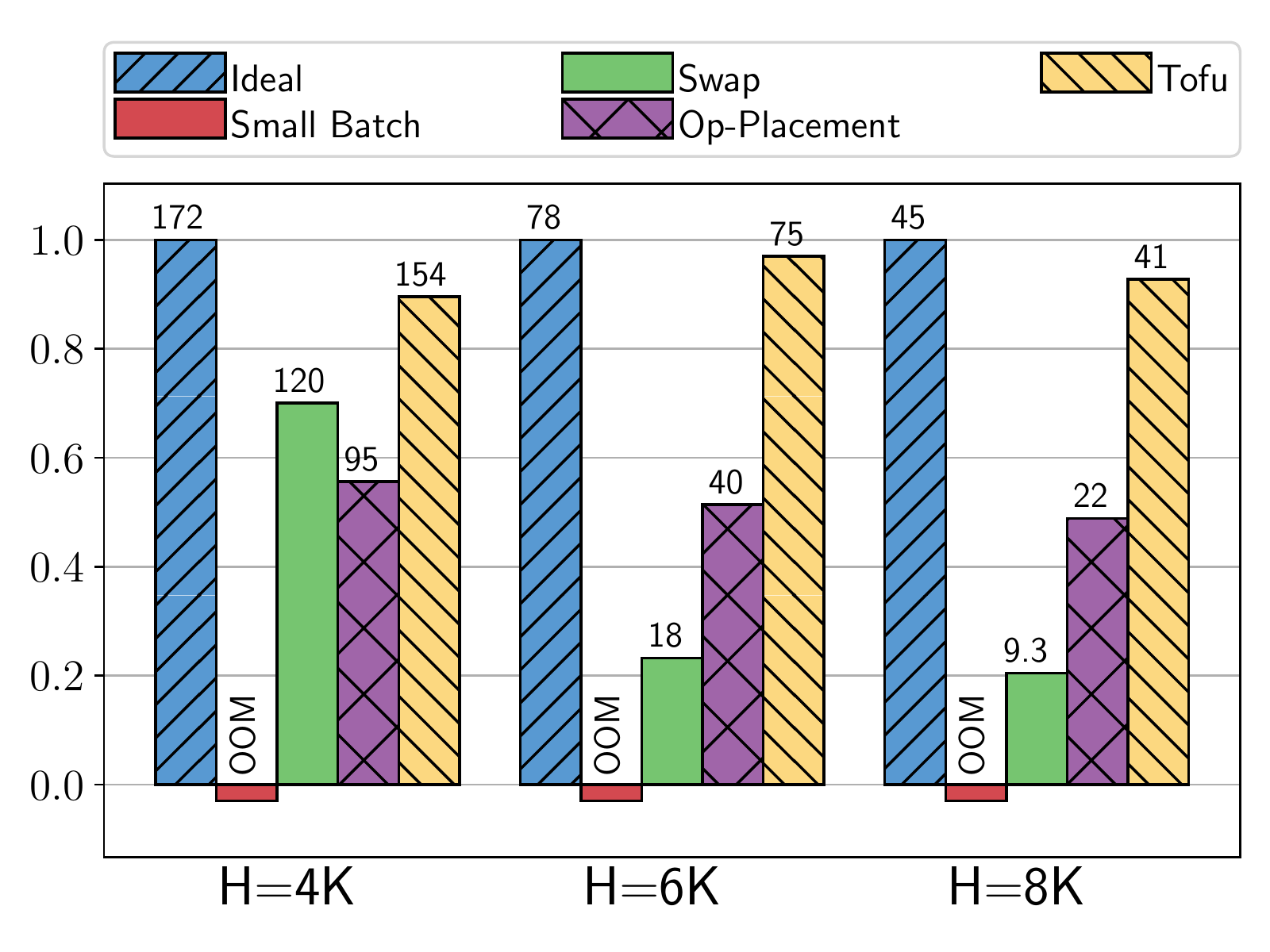}
        \caption{8 layers RNN}\label{fig:rnn_8}
	\end{subfigure}
	\begin{subfigure}[t]{0.32\textwidth}
		\centering
		\includegraphics[width=\linewidth]{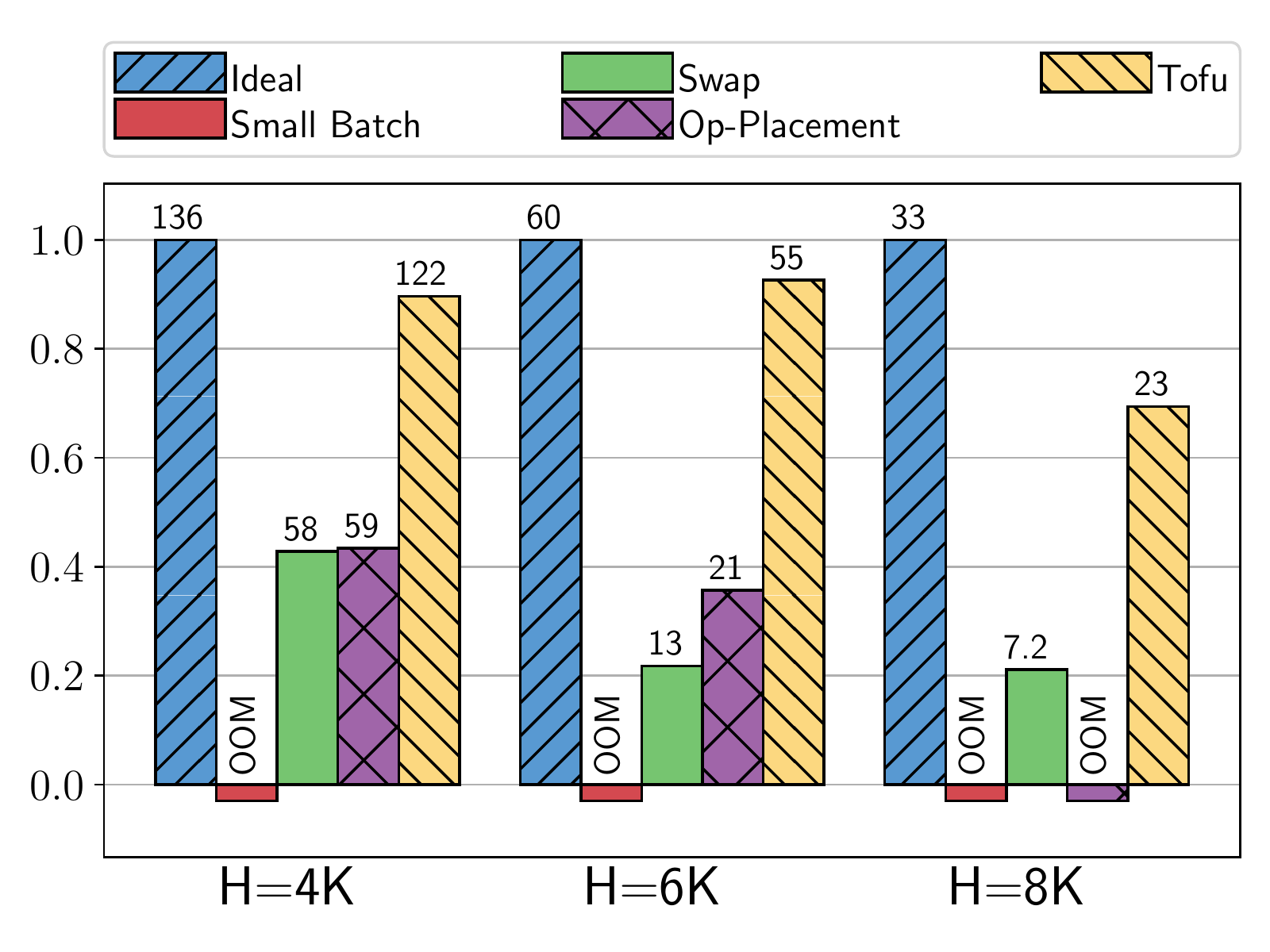}
        \caption{10 layers RNN}\label{fig:rnn_10}
	\end{subfigure}
	\captionsetup{aboveskip=0pt}
    \caption{Normalized RNN throughput relative to the ideal performance.
             The number on each bar shows the absolute 
             throughput in samples/sec.}
    \label{fig:rnn_throughput}
\end{figure*}

{\bf RNN Performance.}
Figure~\ref{fig:rnn_throughput} shows the throughput for RNNs.  The ideal
baseline uses a (global) batch size of 512. 
\name performs better than the other baselines in all RNN configurations,
achieving 70\% - 98\% of ideal throughput.  Unlike the \wnet experiments,
SmallBatch does not achieve better throughput than \name in any RNN
configuration. This is because the main RNN computation is matrix
multiplication, which has much less arithmetic density than convolution.  Thus,
performing matrix multiplication using small batch sizes results in decreased
GPU utilization.  The same reasoning explains why \name's relative performance
with the largest model (RNN-10-8K) is worse than with other RNN models; \name
uses a batch size of 128 in order to fit RNN-10-8K in memory while it uses
larger batch sizes (256 or 512) with other RNN models.
As is also the case with \wnet, SmallBatch results in OOM for larger RNN
configurations.

Operator placement achieves 38\%-61\% of \name's throughput and cannot train
RNN-10-8K (OOM). Two reasons contribute to the lower performance. First, layer-wise
placement results in imbalanced load because the number of layers is not a
multiple of the number of GPUs.  Second, layer-wise placement relies on
pipelined parallelism: GPU-1 executes the first operator in the first
layer and forwards its result to GPU-2. GPU-2 can execute the
first operator in the second layer while GPU-1 concurrently executes
the second node in the first layer.  Pipelined parallelism cannot fully
saturate GPUs at all times: e.g. GPU-2 is idle while GPU-1
executes its first operator.  By contrast, \name parallelizes the execution of
each operator and keeps all GPUs busy at all times.

Swapping achieves 23\% - 30\% throughput of \name and 48\% - 53\% throughput of
operator placement when the weight size is large.  The main reason is that many
tensors may be used simultaneously in RNN training.  To fully saturate a GPU,
most deep learning frameworks, including MXNet and Tensorflow, schedule operators
immediately when they are ready.  RNN's mesh-like dataflow graph results in
more tensors to be used at the same time. When the weight size is large, the
amount of swapping increases significantly.  Coupled with the CPU-GPU
communication bottleneck, swapping is unable to achieve good throughputs for
RNNs.

{\bf Comparing with TensorFlow.}
We compare with Tensorflow v1.8 (using Op-Placement) for training RNNs.
Table~\ref{tbl:tensorflow} shows the throughputs for running on RNN-6-4K, RNN-8-4K,
and RNN-10-4K. For additional comparison points, we also include MXNet (using Op-Placement).
Note that the throughputs of \name and MXNet are same as those in Figure~\ref{fig:rnn_throughput}.
Tensorflow's throughput is roughly half of MXNet and about 23\% of 
\name.  As Tensorflow and MXNet use the same operator kernel implementations, we originally expected 
the two systems to have similar throughput. However, further investigation shows that
TensorFlow \textred{does not support in-place gradient aggregation which may be crucial for 
the performance of large RNNs.}

\begin{table}[t]
    \centering
    \footnotesize
    \begin{tabularx}{\columnwidth}{|>{\hsize=.4\hsize}X|>{\hsize=.2\hsize}X|>{\hsize=.2\hsize}X|>{\hsize=.2\hsize}X|}
      \hline
                           & \textbf{RNN-6}   & \textbf{RNN-8}   & \textbf{RNN-10}\\
      \hline
       \name               & 210    & 154  & 122\\
      \hline
       MX-OpPlacement      & 107    & 95  & 59\\
      \hline
       TF-OpPlacement      & 50     & 36  & 30\\
      \hline
    \end{tabularx}
    \caption{Comparison of throughput (samples/second) for RNN models. The hidden size is 4096.}
    \vspace{-10pt}
\label{tbl:tensorflow}
\end{table}

\subsection{Comparing different partition algorithms}
\label{ss:eval-partition}
\begin{figure}
    \centering
    \begin{subfigure}[t]{0.49\columnwidth}
    	\centering
    	\includegraphics[width=\linewidth]{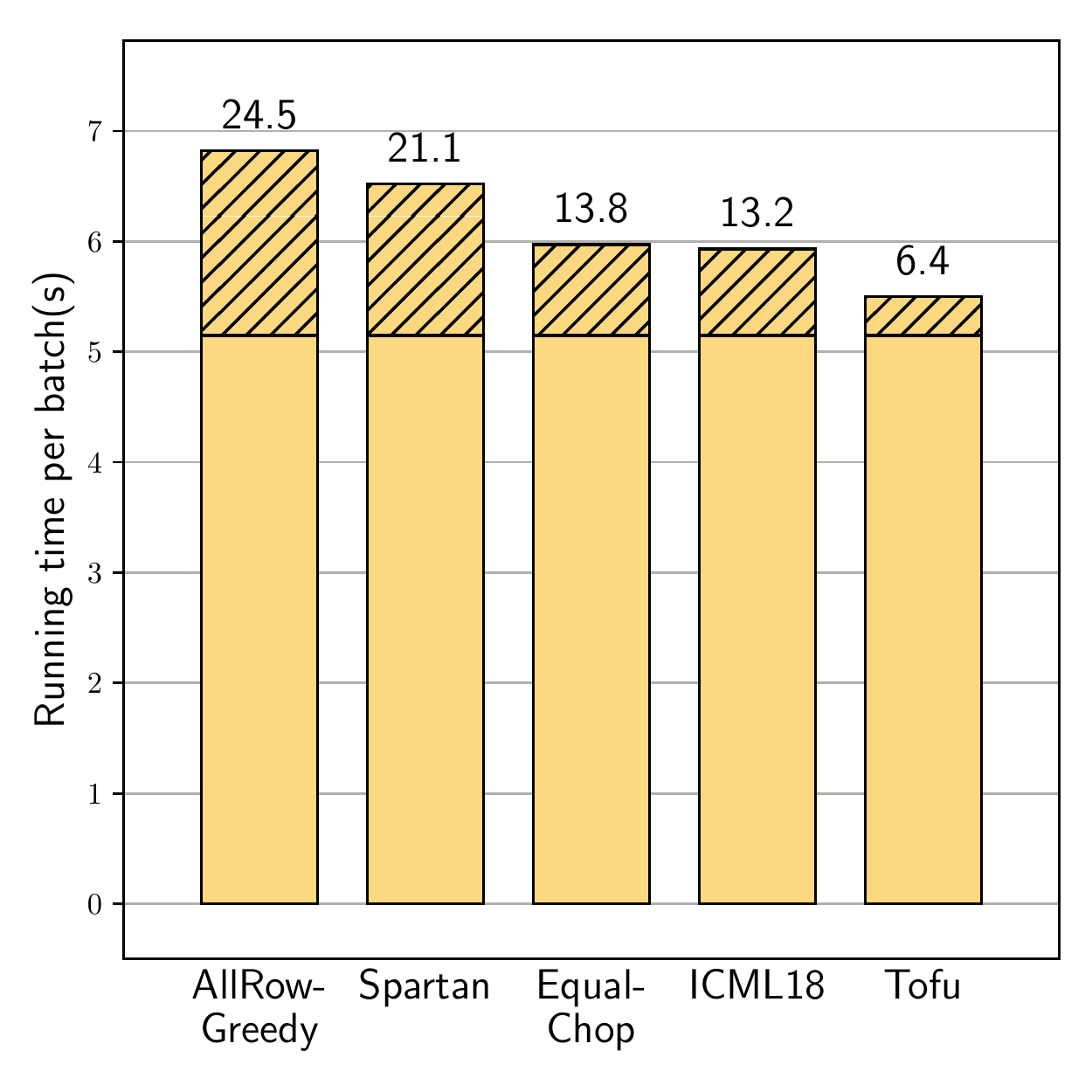}
        \caption{RNN-4-8K}
    \end{subfigure}
    \begin{subfigure}[t]{0.49\columnwidth}
     	\centering
       	\includegraphics[width=\linewidth]{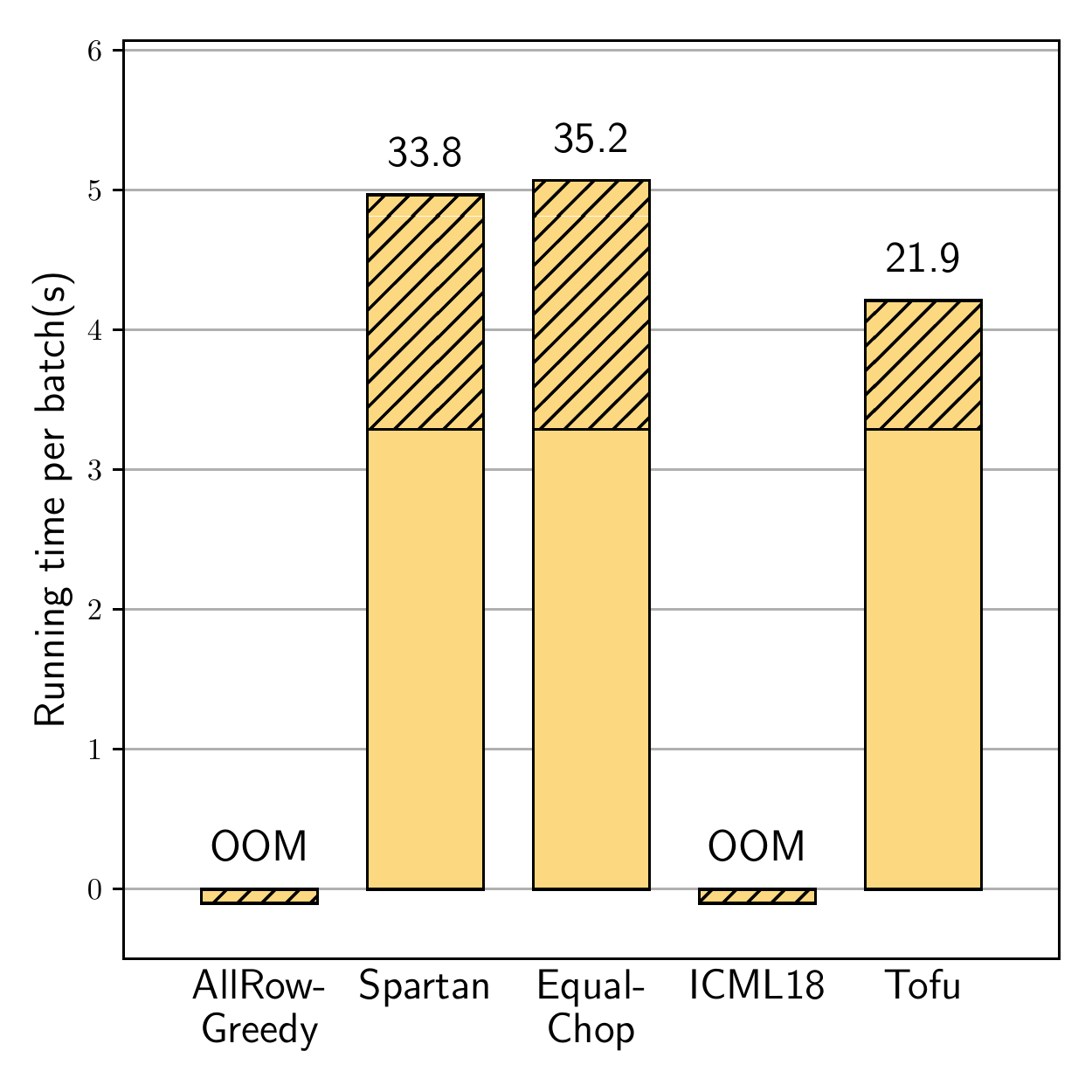}
        \caption{WResNet-152-10}
    \end{subfigure}
    \caption{Comparison of different partition algorithms using RNN-4-8K and
             WResNet-152-10 on 8 GPUs. Striped parts show the overhead
             (percentage) due to communication.}
    \label{fig:partition}
\end{figure}

\begin{figure*}[!pt]
    \centering
    \includegraphics[width=\linewidth]{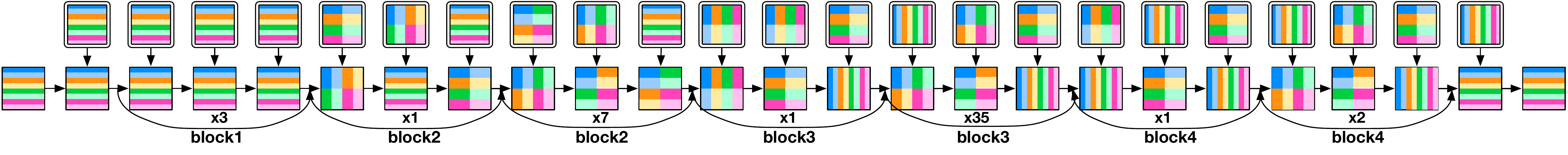}
    \caption{The partition found by \name for WResNet-152-10 on 8 GPUs.
    \textred{We draw the weight tensors (top row) and the activation/data
	tensors (bottom row) used by convolution operators. Partitioning is marked
	by the tiles and each color shows the tiles owned by the same GPU.}
	The vertical and horizontal dimensions of an activation tensor indicate the batch and channel
    dimensions. 'xN' symbol means the corresponding block is repeated N times.
    }
    \label{fig:resnet_partitions}
\end{figure*}

We have compared \name's search time with the original DP
algorithm~\cite{jia2018exploring} in \secref{ss:dataflow-recursive}
(Table~\ref{tbl:improve}). We now compare the quality of partition 
plan found by \name vs. ~\cite{jia2018exploring} and various other heuristics. 

The simplest heuristic (\texttt{\small
AllRow-Greedy}) partitions all tensors along the first dimension and partitions each
operator using the best strategy given that its input/output tensors are
partitioned on the first dimension. Note that, for the case of WResNet, this gives similar
result as the \emph{one-weird-trick} strategy proposed in~\cite{krizhevsky:wierd},
because all the convolution layers are partitioned by the batch dimension
and the only fully-connected layer in WResNet occupies <1\% of the total time.
Our next heuristic is to greedily
partition the largest tensor first (along any dimension), followed by its
incident operators, followed by the second largest tensor and so on. This is
equivalent to what is proposed by Spartan~\cite{spartan}. We also compare with
\name's DP algorithm applied to chop each tensor equally along only one
dimension (\texttt{\small EqualChop}).  Finally, we compare with the algorithm
in~\cite{jia2018exploring}(\texttt{\small ICML18}) which does not consider the
partition strategy of aggregating output tensors (aka output-reduction). 

Figure~\ref{fig:partition} shows the execution time of training one batch on 8 GPUs for 
RNN-4-8K (batch size is 512) and WResNet-152-10 (batch size is 8).  
To see the impact of communication on the execution time, we modify the
backend to skip memory copy among GPUs and measure the resulting pure
computation time, which is shown as the lower light-colored portion of the bars in
Figure~\ref{fig:partition}.

\texttt{\small AllRow-Greedy} performs worse among all the algorithms and
run out of memory for WResNet-152-10 because it needs to fetch too much
data from the other GPUs.  \texttt{\small Spartan} and \texttt{\small EualChop}
reduce the communication overhead by 3\%-10\% but are still worse than \name.
This result shows the benefit of partitioning a tensor along multiple dimensions.
\texttt{\small ICML18} is 7\% slower than \name for RNN-4-8K and results in OOM for
WResNet-152-10 due to the lack of output-reduction.  After adding output-reduction,
\texttt{\small ICML18} can find the same strategy as \name, albeit with a much
longer search time (see Table~\ref{tbl:improve}).

\subsection{Partition Results}
\label{ss:partition-result}
Figure~\ref{fig:resnet_partitions} shows the partition found by \name for WResNet-152-10.
ResNet-152 contains 4 groups of residual blocks: 
each block includes 3 convolutions 
and is repeated 3, 8, 36, and 3 times for each group respectively. \REV{revision plan Sec.7 bullet 3}
\textred{The lower
residual blocks (those close to the input layer) have larger feature map but smaller
weight tensors while the higher ones are the opposite. 

We make the following observations:
\begin{itemize}[leftmargin=*]
	\setlength\itemsep{0em}
	\item \name partitions both the batch and channel dimensions and the resulting partition plan 
	is a complicated combination of different partition strategies.
	\item \name chooses different partition plans for different convolution layers within one residual block.
	Repeated residual blocks are partitioned in the same way except for the first
	block in the group which has a different configuration to shrink the initial input feature
	map size by half.
	\item As the
	activation tensors in lower layers are larger and the weight tensor smaller, \name chooses to 
	fetch weight tensors from remote GPUs to save communication.
	As the weight tensors are larger in the higher
	layers, \name switches to partition strategies that fetch the relatively smaller
	activation tensors.
\end{itemize}
}

\section{Related Work}
\label{sec:related}

\noindent{\bf Parallel DNN training.}
Many parallel strategies have been developed to speedup
DNN training. Some strategies such as the popular
data parallelism~\cite{muli:osdi14,xingbounded:atc14,bosen:socc15,geeps:eurosys16}
cannot be used for training very large models because the parameters
are replicated to each device.
Model parallelism spreads out the model parameters to multiple GPUs,
thus is suitable for training very large models. 
Early work\cite{coates2013deep,krizhevsky:wierd,dean:nn} parallelizes 
specific classes of DNN models, and is
limited in flexibility and generality.
Minerva\cite{minerva} and Strads\cite{strads:eurosys16} require users to
implement extra interfaces to partition model parameters while \name
requires no change to the user program.
Another approach is to assign different layers/operators
to different devices via heuristics~\cite{shazeer2017outrageously} or
stochastic search~\cite{sutskever2014sequence,mirhoseini2017device}.
However, operator placement only works well only when there are
sufficiently many concurrent operators, and thus is not suitable
for DNN models with a deep stack of layers.


\noindent{\bf Out-of-core DNN training.}
This includes recomputation on demand
~\cite{gruslys2016memory,martens2012training,chen2016training}
, swapping and prefetching from host memory~\cite{meng2017training, sekiyama2018profile, rhu2016vdnn}.
Recomputation is not viable for large weight tensors. Swapping with host memory reduces
the opportunity of co-locating
computation and data, and scales poorly when there are multiple GPUs.
None of them can efficiently utilize the aggregated memory capacity
of multiple cards as \name does. Moreover, \name can also be combined with these
techniques.

\noindent{\bf Model compression.} This includes 
network pruning~\cite{han2015learning,han2015deep} (which removes small weight
values), quantization\cite{gong2014compressing} and reduced
precision\cite{hubara2016binarized}. The compressed model can then be deployed
on mobile or edge devices or to speed up the inference.  However, these
approaches affect model accuracy while \name allows exploring very large models
without changing the model behavior.

\noindent{\bf Parallel tensor computing.}
There is a long history in developing efficient parallel systems for tensor computing.
The very first effort starts from developing low-level, optimized, parallel matrix/tensor
libraries~\cite{lapack,scalapack,Elemental,globalarrays,petsc-efficient}.
These libraries implement efficient parallel matrix algorithms~\cite{cannon,summa} and tensor
operations~\cite{tensor-contraction}. However, they have very limited programmability support and
adding new operators requires tremendous manual efforts.

Many frameworks or tools have been built to ease the programming
of parallel tensor computation. In the low-level, ZPL~\cite{zpl}, Chapel~\cite{chapel}
and Unified Parallel C~\cite{upcspec} are parallel language supports. In the higher-level,
systems such as~\cite{SciHadoop,googler,sparkr,kasen,dmll,spartan}
let users write programs in high-level primitives like map and reduce. MadLinq~\cite{madlinq}
and Presto~\cite{presto} let user describe operators using parallel loop
primitives. Users need to express parallelism using the proper combination of
these primitives. For example, implementing a parallel matrix multiplication needs
to call the \code{shuffle} primitive in Spartan~\cite{spartan} or the \code{Collect}
primitive in ~\cite{dmll}. However, these primitives are limited (e.g. it
is hard to express halo-exchange in convolution).
Distributed Halide~\cite{distributedhalide} lets user
describe the algorithm in their DSL and specifies how it is paralleled.
As there are usually multiple ways of partitioning
data and computation, the efficiency varies with different implementations.
Spartan~\cite{spartan} and Kasen~\cite{kasen} propose algorithm to automatically
optimize array/matrix
partitioning to reduce communication. \cite{dmll} further improves this by also
considering different parallel patterns via transformations of nested high-level
primitives.

More recent proposals aim to fully automate the whole stack -- user programs
are written in array language and the system can distribute the data and computation
automatically. There are several approaches. Cylops Tensor Framework~\cite{ctf}
and Tensor Contraction Engine~\cite{tce} are specialized systems
for automatically parallelizing tensor contraction.
Spartan tries to map Numpy operators to high-level map and reduce primitives and
then partitions them accordingly. Others tried to leverage the parallelism
among array operators. For example,
Pydron~\cite{pydron:osdi14} translates Python program into an internal dataflow graph
to parallelize independent loops.
\cite{sutskever2014sequence,mirhoseini2017device} tries to dispatch array operators
to different devices automatically based on the dataflow graph. However, they
are not suitable for DNN computation that is mostly sequential.
Compared with previous
systems, \name automatically discovers the \pnr parallel patterns of operators
using TDL description and optimizes partitioning for the entire dataflow graph.

\noindent\textbf{Data layout optimization.}
There have been extensive work on optimizing communication (aka remote memory access)
on the multiprocessor architecture (e.g. ~\cite{hudak1990compiler,
knobe1990data, philippsen1995automatic,
milosavljevic1999automatic, ramanujam1991compile,ramanujam1989methodology,
bau1995solving, d1989partitioning, huang1993communication,ju1992reduction,
lu2009data}) or the new hardware~\cite{neurocube,tetris,yang2016systematic}.
Since searching the optimal solution is NP-Complete~\cite{kennedy1998automatic, kremer1993np, li1990index,
li1991data}, heuristics are used in practice~\cite{li1991data,
philippsen1995automatic}. By contrast, \name analyzes the relatively simpler
operator description language instead of the source code, and exploits the DNN
computation structure for its optimization.

\section{Discussion, limitations, and future work}
\label{s:discussion}

{\bf Fundamental limitations.}  \name only supports
parallelization via \pnr, which restricts each worker to perform
a coarse-grained task identical to the original computation.  This
pattern is not applicable to all parallelizable computation (e.g.
Cholesky~\cite{madlinq}).  Furthermore, the \pnr parallel strategies
do not necessarily minimize communication, and 
do not take advantage of the underlying interconnect topology.  By contrast,
parallel algorithms developed for specific computation (e.g. matrix
multiplication~\cite{cannon,summa}, tensor contraction~\cite{ctf}) are
explicitly structured to minimize communication and exploit the interconnect
topology.

\paragraph{\bf Limitations of TDL.} TDL is a simple language without
control flow primitives and data-dependent indexing.  Furthermore, \name 
does not support sparse tensor operations due to load-imbalance, even though they can usually be described in
TDL. For certain operations, these limitations may be removed by supporting
data-dependent partitioning (e.g. as in parallel graph
computation~\cite{powergraph:osdi12}) or by sampling runtime information (e.g.
as in parallel range sort~\cite{mapreduce}).

\name does not verify that the operator implementation matches its TDL description.
Such verification is an open research problem even if the underlying implementation
is open sourced.  A more promising direction is to leverage recent 
operator code-generation tools such as TVM~\cite{tvm:osdi18} and TC~\cite{tc:fb}.
As TVM and TC are also based on Halide, our analysis techniques can be ported
to analyze operators implemented in these languages.

\paragraph{\bf Partition flexibility and hardware heterogeneity.}  \name always
partitions every operator and tensor across all workers.  For moderately sized
DNN models, partitioning across all workers lead to small GPU kernels that leave
a GPU unsaturated.  In such scenarios, it may be beneficial to leave certain
operators un-partitioned or partially partitioned among a subset of workers.
Furthermore, \name has no support for non-uniform partitioning when GPUs
have different computing and memory capacity.  Although \name's
search algorithm tries to accommodate bandwidth differences in a hierarchical
interconnect, it does not explicitly optimize communication according to the
interconnect topology.

Unfortunately, \name's recursive search cannot be extended to address the 
above limitations. This is because the underlying DP algorithm cannot optimally search different
device placement choices for un-partitioned, or non-uniformly-partitioned
operators.  Exploring stochastic search
mechanisms~\cite{mirhoseini2017device,placement:iclr18,jia2018beyond} is a direction of future work. 


\textred{
\section{Conclusion}

We present the \name system, which enables the training of very large
DNN models by partitioning a dataflow graph of tensors across multiple GPU
devices. To automate this process, \name infers each operator's valid partition
strategies by analyzing its semantics written in a simple description language
(TDL). \name uses a recursive search algorithm based on dynamic programming and
DNN-specific heuristics to find the best partition plan that minimizes 
communication for the entire dataflow graph.  
}
\section*{Acknowledgements}
This work is supported in part by the National Science Foundation under award
CNS-1816717, NVIDIA AI Lab (NVAIL) at NYU, and AWS cloud credits for research. Our
shepherd, Chris De Sa, and other anonymous reviewers have given helpful
feedback that improved this work.  We also thank Jeff Hammond for pointing us
to related work in the HPC community, esp. work on tensor contraction
engines.

\bibliographystyle{unsrt}

\begin{thebibliography}{10}

\bibitem{wideresnet}
Sergey Zagoruyko and Nikos Komodakis.
\newblock Wide residual networks.
\newblock In {\em arXiv:1605.07146}, 2016.

\bibitem{google:nmt}
Yonghui Wu, Mike Schuster, Zhifeng Chen, Quoc~V. Le, and Mohammad Norouzi.
\newblock Google's neural machine translation system: Bridging the gap between
  human and machine translation.
\newblock In {\em arxiv.org:1609.08144}, 2016.

\bibitem{deeplearningbook}
Ian Goodfellow, Yoshua Bengio, and Aaron Courville.
\newblock {\em Deep Learning}.
\newblock MIT Press, 2016.
\newblock \url{http://www.deeplearningbook.org}.

\bibitem{meng2017training}
Chen Meng, Minmin Sun, Jun Yang, Minghui Qiu, and Yang Gu.
\newblock Training deeper models by gpu memory optimization on tensorflow.
\newblock In {\em Proc. of ML Systems Workshop in NIPS}, 2017.

\bibitem{gruslys2016memory}
Audrunas Gruslys, R{\'e}mi Munos, Ivo Danihelka, Marc Lanctot, and Alex Graves.
\newblock Memory-efficient backpropagation through time.
\newblock In {\em Advances in Neural Information Processing Systems}, pages
  4125--4133, 2016.

\bibitem{martens2012training}
James Martens and Ilya Sutskever.
\newblock Training deep and recurrent networks with hessian-free optimization.
\newblock In {\em Neural networks: Tricks of the trade}, pages 479--535.
  Springer, 2012.

\bibitem{chen2016training}
Tianqi Chen, Bing Xu, Chiyuan Zhang, and Carlos Guestrin.
\newblock Training deep nets with sublinear memory cost.
\newblock {\em arXiv preprint arXiv:1604.06174}, 2016.

\bibitem{dean:nn}
Jeffrey Dean, Greg~S. Corrado, Rajat Monga, Kai Chen, Matthieu Devin, Quoc~V.
  Le, Mark~Z. Mao, Marc'Aurelio Ranzato, Andrew Senior, Paul Tucker, Ke~Yang,
  and Andrew~Y. Ng.
\newblock Large scale distributed deep networks.
\newblock In {\em Neural Information Processing Systems (NIPS)}, 2012.

\bibitem{coates2013deep}
Adam Coates, Brody Huval, Tao Wang, David Wu, Bryan Catanzaro, and Ng~Andrew.
\newblock Deep learning with {C}{O}{T}{S} {H}{P}{C} systems.
\newblock In {\em Proceedings of the 30th International Conference on Machine
  Learning (ICML-13)}, pages 1337--1345, 2013.

\bibitem{projectadam}
Trishul Chilimbi, Yutaka Suzue, Johnson Apacible, and Karthik Kalyanaraman.
\newblock Project adam: Building an efficient and scalable deep learning
  training system.
\newblock In {\em Proceedings of the 11th USENIX Conference on Operating
  Systems Design and Implementation}, OSDI'14, 2014.

\bibitem{yang2016systematic}
Xuan Yang, Jing Pu, Blaine~Burton Rister, Nikhil Bhagdikar, Stephen Richardson,
  Shahar Kvatinsky, Jonathan Ragan-Kelley, Ardavan Pedram, and Mark Horowitz.
\newblock A systematic approach to blocking convolutional neural networks.
\newblock {\em arXiv preprint arXiv:1606.04209}, 2016.

\bibitem{neurocube}
Duckhwan Kim, Jaeha Kung, Sek Chai, Sudhakar Yalamanchili, and Saibal
  Mukhopadhyay.
\newblock Neurocube: A programmable digital neuromorphic architecture with
  high-density 3d memory.
\newblock In {\em Computer Architecture (ISCA), 2016 ACM/IEEE 43rd Annual
  International Symposium on}, pages 380--392. IEEE, 2016.

\bibitem{tetris}
Mingyu Gao, Jing Pu, Xuan Yang, Mark Horowitz, and Christos Kozyrakis.
\newblock Tetris: Scalable and efficient neural network acceleration with 3d
  memory.
\newblock {\em ACM SIGOPS Operating Systems Review}, 51(2):751--764, 2017.

\bibitem{jia2018exploring}
Zhihao Jia, Sina Lin, Charles~R. Qi, and Alex Aiken.
\newblock Exploring hidden dimensions in parallelizing convolutional neural
  networks.
\newblock In {\em Proceedings of the 35th International Conference on Machine
  Learning, {ICML} 2018, Stockholmsm{\"{a}}ssan, Stockholm, Sweden, July 10-15,
  2018}, pages 2279--2288, 2018.

\bibitem{jia2018beyond}
Zhihao Jia, Matei Zaharia, and Alex Aiken.
\newblock Beyond data and model parallelism for deep neural networks.
\newblock {\em arXiv preprint arXiv:1807.05358}, 2018.

\bibitem{tensorflow}
Mart{\'\i}n Abadi, Paul Barham, Jianmin Chen, Zhifeng Chen, Andy Davis, Jeffrey
  Dean, Matthieu Devin, Sanjay Ghemawat, Geoffrey Irving, Michael Isard,
  Manjunath Kudlur, Josh Levenberg, Rajat Monga, Sherry Moore, Derek~G. Murray,
  Benoit Steiner, Paul Tucker, Vijay Vasudevan, Pete Warden, Martin Wicke, Yuan
  Yu, and Xiaoqiang Zheng.
\newblock Tensorflow: A system for large-scale machine learning.
\newblock In {\em 12th USENIX Symposium on Operating Systems Design and
  Implementation (OSDI 16)}, 2016.

\bibitem{mxnet}
Tianqi Chen, Mu~Li, Yutian Li, Min Lin, Naiyan Wang, Minjie Wang, Tianjun Xiao,
  Bing Xu, Chiyuan Zhang, and Zheng Zhang.
\newblock Mxnet: A flexible and efficient machine learning library for
  heterogeneous distributed systems.
\newblock {\em arXiv preprint arXiv:1512.01274}, 2015.

\bibitem{pytorch}
Py{T}orch.
\newblock \url{http://pytorch.org}.

\bibitem{halide}
Jonathan Ragan-Kelley, Connelly Barnes, Andrew Adams, Sylvain Paris, Fr{\'e}do
  Durand, and Saman Amarasinghe.
\newblock Halide: a language and compiler for optimizing parallelism, locality,
  and recomputation in image processing pipelines.
\newblock {\em ACM SIGPLAN Notices}, 48(6):519--530, 2013.

\bibitem{rnnlimit}
Rafal J{\'{o}}zefowicz, Oriol Vinyals, Mike Schuster, Noam Shazeer, and Yonghui
  Wu.
\newblock Exploring the limits of language modeling.
\newblock {\em CoRR}, abs/1602.02410, 2016.

\bibitem{tpu:architecture}
Google Cloud.
\newblock Tpu: System architecture.

\bibitem{madlinq}
Zhengping Qian, Xiuwei Chen, Nanxi Kang, Mingcheng Chen, Yuan Yu, Thomas
  Moscibroda, and Zheng Zhang.
\newblock Mad{LINQ}: large-scale distributed matrix computation for the cloud.
\newblock In {\em Proceedings of the 7th ACM european conference on Computer
  Systems}, EuroSys '12, 2012.

\bibitem{cannon}
L.~E. Cannon.
\newblock {\em A cellular computer to implement the Kalman Filter Algorithm}.
\newblock PhD thesis, Montana State University, 1969.

\bibitem{kennedy1998automatic}
Ken Kennedy and Ulrich Kremer.
\newblock Automatic data layout for distributed-memory machines.
\newblock {\em ACM Transactions on Programming Languages and Systems (TOPLAS)},
  20(4):869--916, 1998.

\bibitem{kremer1993np}
Ulrich Kremer.
\newblock Np-completeness of dynamic remapping.
\newblock In {\em Proceedings of the Fourth Workshop on Compilers for Parallel
  Computers, Delft, The Netherlands}, 1993.

\bibitem{li1990index}
Jingke Li and Marina Chen.
\newblock Index domain alignment: Minimizing cost of cross-referencing between
  distributed arrays.
\newblock In {\em Frontiers of Massively Parallel Computation, 1990.
  Proceedings., 3rd Symposium on the}, pages 424--433. IEEE, 1990.

\bibitem{li1991data}
Jingke Li and Marina Chen.
\newblock The data alignment phase in compiling programs for distributed-memory
  machines.
\newblock {\em Journal of parallel and distributed computing}, 13(2):213--221,
  1991.

\bibitem{mirhoseini2017device}
Azalia Mirhoseini, Hieu Pham, Quoc~V Le, Benoit Steiner, Rasmus Larsen, Yuefeng
  Zhou, Naveen Kumar, Mohammad Norouzi, Samy Bengio, and Jeff Dean.
\newblock Device placement optimization with reinforcement learning.
\newblock {\em arXiv preprint arXiv:1706.04972}, 2017.

\bibitem{placement:iclr18}
Azalia Mirhoseini, Anna Goldie, Hieu Pham, Benoit Steiner, Quoc~V. Le, and Jeff
  Dean.
\newblock A hierarchical model for device placement.
\newblock In {\em ICLR}, 2018.

\bibitem{tvm:osdi18}
Tianqi Chen, Thierry Moreau, Ziheng Jiang, Lianmin Zheng, Eddie Yan, Haichen
  Shen, Meghan Cowan, Leyuan Wang, Yuwei Hu, Luis Ceze, Carlos Guestrin, and
  Arvind Krishnamurthy.
\newblock {TVM}: An automated end-to-end optimizing compiler for deep learning.
\newblock In {\em 13th {USENIX} Symposium on Operating Systems Design and
  Implementation ({OSDI} 18)}, Carlsbad, CA, 2018. {USENIX} Association.

\bibitem{tc:fb}
Nicolas Vasilache, Oleksandr Zinenko, Theodoros Theodoridis, Priya Goyal,
  Zachary DeVito, William~S. Moses, Sven Verdoolaege, Andrew Adams, and Albert
  Cohen.
\newblock Tensor comprehensions: Framework-agnostic high-performance machine
  learning abstractions.
\newblock In {\em arXiv:1802.04730v2}, 2018.

\bibitem{venet2012gauge}
Arnaud~J Venet.
\newblock The gauge domain: scalable analysis of linear inequality invariants.
\newblock In {\em International Conference on Computer Aided Verification},
  pages 139--154. Springer, 2012.

\bibitem{rugina2000symbolic}
Radu Rugina and Martin Rinard.
\newblock Symbolic bounds analysis of pointers, array indices, and accessed
  memory regions.
\newblock In {\em ACM Sigplan Notices}, volume~35, pages 182--195. ACM, 2000.

\bibitem{wu2014abstract}
Xueguang Wu, Liqian Chen, and Ji~Wang.
\newblock An abstract domain to infer symbolic ranges over nonnegative
  parameters.
\newblock {\em Electronic Notes in Theoretical Computer Science}, 307:33--45,
  2014.

\bibitem{spartan}
Chien-Chin Huang, Qi~Chen, Zhaoguo Wang, Russell Power, Jorge Ortiz, Jinyang
  Li, and Zhen Xiao.
\newblock Spartan: A distributed array framework with smart tiling.
\newblock In {\em USENIX Annual Technical Conference}, 2015.

\bibitem{homeomorphic}
J.A Bondy and U.S.R. Murty.
\newblock {\em Graph Theory with Applications}.
\newblock Elseyier Science Publishing, 1976.

\bibitem{wang2018supporting}
Minjie Wang, Chien-chin Huang, and Jinyang Li.
\newblock Supporting very large models using automatic dataflow graph
  partitioning.
\newblock {\em arXiv preprint arXiv:1807.08887}, 2018.

\bibitem{resnet}
Kaiming He, Xiangyu Zhang, Shaoqing Ren, and Jian Sun.
\newblock Deep residual learning for image recognition.
\newblock In {\em Proceedings of the IEEE conference on computer vision and
  pattern recognition}, pages 770--778, 2016.

\bibitem{lstm}
Sepp Hochreiter and J{\"u}rgen Schmidhuber.
\newblock Long short-term memory.
\newblock {\em Neural computation}, 9(8):1735--1780, 1997.

\bibitem{adamoptimizer}
Diederik~P. Kingma and Jimmy Ba.
\newblock Adam: {A} method for stochastic optimization.
\newblock {\em CoRR}, abs/1412.6980, 2014.

\bibitem{adagrad}
John Duchi, Elad Hazan, and Yoram Singer.
\newblock Adaptive subgradient methods for online learning and stochastic
  optimization.
\newblock {\em Journal of Machine Learning Research}, 12(Jul):2121--2159, 2011.

\bibitem{sekiyama2018profile}
Taro Sekiyama, Takashi Imamichi, Haruki Imai, and Rudy Raymond.
\newblock Profile-guided memory optimization for deep neural networks.
\newblock {\em arXiv preprint arXiv:1804.10001}, 2018.

\bibitem{rhu2016vdnn}
Minsoo Rhu, Natalia Gimelshein, Jason Clemons, Arslan Zulfiqar, and Stephen~W
  Keckler.
\newblock vdnn: Virtualized deep neural networks for scalable, memory-efficient
  neural network design.
\newblock In {\em Microarchitecture (MICRO), 2016 49th Annual IEEE/ACM
  International Symposium on}, pages 1--13. IEEE, 2016.

\bibitem{sutskever2014sequence}
Ilya Sutskever, Oriol Vinyals, and Quoc~V Le.
\newblock Sequence to sequence learning with neural networks.
\newblock In {\em Advances in neural information processing systems}, pages
  3104--3112, 2014.

\bibitem{shazeer2017outrageously}
Noam Shazeer, Azalia Mirhoseini, Krzysztof Maziarz, Andy Davis, Quoc Le,
  Geoffrey Hinton, and Jeff Dean.
\newblock Outrageously large neural networks: The sparsely-gated
  mixture-of-experts layer.
\newblock {\em arXiv preprint arXiv:1701.06538}, 2017.

\bibitem{krizhevsky:wierd}
Alex Krizhevsky.
\newblock One weird trick for parallelizing convolutional neural networks.
\newblock In {\em arXiv:1404.5997}, 2014.

\bibitem{muli:osdi14}
Mu~Li, David~G. Andersen, Jun~Woo Park, Alexander~J. Smola, Amr Ahmed, Vanja
  Josifovski, James Long, Eugene~J. Shekita, and Bor-Yiing Su.
\newblock Scaling distributed machine learning with the parameter server.
\newblock In {\em USENIX OSDI}, 2014.

\bibitem{xingbounded:atc14}
H.~Cui, J.~Cipar, Q.~Ho, J.K. Kim, S.~Lee, A.~Kumar, J.Wei, W.~Dai, G.~R.
  Ganger, P.B. Gibbons, G.~A. Gibson, and E.~P. Xing.
\newblock Exploiting bounded staleness to speed up big data analytics.
\newblock In {\em USENIX Annual Technical Conference}, 2014.

\bibitem{bosen:socc15}
J.~Wei, W.~Dai, A.~Qiao, H.~Cui, Q.~Ho, G.~R. Ganger, P.~B. Gibbons, G.~A.
  Gibson, and E.P. Xing.
\newblock Managed communication and consistency for fast data-parallel
  iterative analytics.
\newblock In {\em ACM Symposium on Cloud Computing (SoCC)}, 2015.

\bibitem{geeps:eurosys16}
Henggang Cui, Hao Zhang, Gregory~R. Ganger, Phillip~B. Gibbons, and Eric~P.
  Xing.
\newblock Geeps: Scalable deep learning on distributed gpus with a
  gpu-specialized parameter server.
\newblock In {\em Eurosys}, 2016.

\bibitem{minerva}
Minjie Wang, Tianjun Xiao, Jianpeng Li, Jiaxing Zhang, Chuntao Hong, and Zheng
  Zhang.
\newblock Minerva: A scalable and highly efficient training platform for deep
  learning.
\newblock In {\em NIPS Workshop, Distributed Machine Learning and Matrix
  Computations}, 2014.

\bibitem{strads:eurosys16}
Jin~Kyu Kim, Qirong Ho, Seunghak Lee, Xun Zheng, Wei Dai, Garth Gibson, and
  Eric Xing.
\newblock Strads: A distributed framework for scheduled model parallel machine
  learning.
\newblock In {\em Eurosys}, 2016.

\bibitem{han2015learning}
Song Han, Jeff Pool, John Tran, and William Dally.
\newblock Learning both weights and connections for efficient neural network.
\newblock In {\em Advances in neural information processing systems}, pages
  1135--1143, 2015.

\bibitem{han2015deep}
Song Han, Huizi Mao, and William~J Dally.
\newblock Deep compression: Compressing deep neural networks with pruning,
  trained quantization and huffman coding.
\newblock {\em arXiv preprint arXiv:1510.00149}, 2015.

\bibitem{gong2014compressing}
Yunchao Gong, Liu Liu, Ming Yang, and Lubomir Bourdev.
\newblock Compressing deep convolutional networks using vector quantization.
\newblock {\em arXiv preprint arXiv:1412.6115}, 2014.

\bibitem{hubara2016binarized}
Itay Hubara, Matthieu Courbariaux, Daniel Soudry, Ran El-Yaniv, and Yoshua
  Bengio.
\newblock Binarized neural networks.
\newblock In {\em Advances in neural information processing systems}, pages
  4107--4115, 2016.

\bibitem{lapack}
Edward Anderson, Zhaojun Bai, J~Dongarra, A~Greenbaum, A~McKenney, Jeremy
  Du~Croz, S~Hammerling, J~Demmel, C~Bischof, and Danny Sorensen.
\newblock {LAPACK}: A portable linear algebra library for high-performance
  computers.
\newblock In {\em Proceedings of the 1990 ACM/IEEE conference on
  Supercomputing}, pages 2--11. IEEE Computer Society Press, 1990.

\bibitem{scalapack}
Jaeyoung Choi, Jack~J Dongarra, Roldan Pozo, and David~W Walker.
\newblock Scalapack: A scalable linear algebra library for distributed memory
  concurrent computers.
\newblock In {\em Frontiers of Massively Parallel Computation, 1992., Fourth
  Symposium on the}, pages 120--127. IEEE, 1992.

\bibitem{Elemental}
Jack Poulson, Bryan Marker, Robert~A. van~de Geijn, Jeff~R. Hammond, and
  Nichols~A. Romero.
\newblock Elemental: A new framework for distributed memory dense matrix
  computations.
\newblock {\em ACM Trans. Math. Softw.}, 39(2):13:1--13:24, feb 2013.

\bibitem{globalarrays}
Jaroslaw Nieplocha, Robert~J Harrison, and Richard~J Littlefield.
\newblock Global arrays: A nonuniform memory access programming model for
  high-performance computers.
\newblock {\em The Journal of Supercomputing}, 10(2):169--189, 1996.

\bibitem{petsc-efficient}
Satish Balay, William~D. Gropp, Lois~Curfman McInnes, and Barry~F. Smith.
\newblock Efficient management of parallelism in object oriented numerical
  software libraries.
\newblock In E.~Arge, A.~M. Bruaset, and H.~P. Langtangen, editors, {\em Modern
  Software Tools in Scientific Computing}, pages 163--202. Birkh{\"{a}}user
  Press, 1997.

\bibitem{summa}
Robert~A. van~de Geijn and Jerrell Watts.
\newblock Summa: Scalable universal matrix multiplication algorithm.
\newblock Technical report, Austin, TX, USA, 1995.

\bibitem{tensor-contraction}
Edgar Solomonik, Devin Matthews, Jeff~R Hammond, John~F Stanton, and James
  Demmel.
\newblock A massively parallel tensor contraction framework for coupled-cluster
  computations.
\newblock {\em Journal of Parallel and Distributed Computing},
  74(12):3176--3190, 2014.

\bibitem{zpl}
Calvin Lin and Lawrence Snyder.
\newblock {ZPL}: An array sublanguage.
\newblock In {\em Languages and Compilers for Parallel Computing}, pages
  96--114. Springer, 1994.

\bibitem{chapel}
B.L. Chamberlain, D.~Callahan, and H.P. Zima.
\newblock Parallel programmability and the chapel language.
\newblock {\em International Journal of High Performance Computing
  Applications}, 2007.

\bibitem{upcspec}
UPC Consortium.
\newblock {UPC} language specifications, v1.2.
\newblock Technical report, Lawrence Berkeley National Lab, 2005.

\bibitem{SciHadoop}
Joe~B. Buck, Noah Watkins, Jeff LeFevre, Kleoni Ioannidou, Carlos Maltzahn,
  Neoklis Polyzotis, and Scott Brandt.
\newblock Scihadoop: array-based query processing in hadoop.
\newblock In {\em Proceedings of 2011 International Conference for High
  Performance Computing, Networking, Storage and Analysis}, 2011.

\bibitem{googler}
Murray Stokely, Farzan Rohani, and Eric Tassone.
\newblock Large-scale parallel statistical forecasting computations in r.
\newblock In {\em JSM Proceedings, Section on Physical and Engineering
  Sciences}, Alexandria, VA, 2011.

\bibitem{sparkr}
Spark{R}: {R} frontend for {S}park.
\newblock \url{http://amplab-extras.github.io/SparkR-pkg}.

\bibitem{kasen}
Mingxing Zhang, Yongwei Wu, Kang Chen, Teng Ma, and Weimin Zheng.
\newblock Measuring and optimizing distributed array programs.
\newblock {\em Proc. VLDB Endow.}, 9(12):912--923, August 2016.

\bibitem{dmll}
Kevin~J. Brown, HyoukJoong Lee, Tiark Rompf, Arvind~K. Sujeeth, Christopher
  De~Sa, Christopher Aberger, and Kunle Olukotun.
\newblock Have abstraction and eat performance, too: Optimized heterogeneous
  computing with parallel patterns.
\newblock In {\em Proceedings of the 2016 International Symposium on Code
  Generation and Optimization}, CGO '16, 2016.

\bibitem{presto}
Shivaram Venkataraman, Erik Bodzsar, Indrajit Roy, Alvin AuYoung, and Robert~S.
  Schreiber.
\newblock Presto: distributed machine learning and graph processing with sparse
  matrices.
\newblock In {\em Proceedings of the 8th ACM European Conference on Computer
  Systems (Eurosys)}, 2013.

\bibitem{distributedhalide}
Tyler Denniston, Shoaib Kamil, and Saman Amarasinghe.
\newblock Distributed halide.
\newblock In {\em Principles and Practice of Parallel Programming (PPoPP)},
  2016.

\bibitem{ctf}
Edgar Solomonik, Devin Matthews, Jeff Hammond, and James Demmel.
\newblock Cyclops tensor framework: Reducing communication and eliminating load
  imbalance in massively parallel contractions.
\newblock In {\em Parallel \& Distributed Processing (IPDPS), 2013 IEEE 27th
  International Symposium on}, pages 813--824. IEEE, 2013.

\bibitem{tce}
So~Hirata.
\newblock Tensor contraction engine: Abstraction and automated parallel
  implementation of configuration-interaction, coupled-cluster, and many-body
  perturbation theories.
\newblock {\em The Journal of Physical Chemistry A}, 107(46):9887--9897, 2003.

\bibitem{pydron:osdi14}
Stefan~C. M{\"u}ller, Gustavo Alonso, Adam Amara, and Andr{\'e} Csillaghy.
\newblock Pydron: Semi-automatic parallelization for multi-core and the cloud.
\newblock In {\em 11th USENIX Symposium on Operating Systems Design and
  Implementation (OSDI 14)}, pages 645--659, Broomfield, CO, October 2014.
  USENIX Association.

\bibitem{hudak1990compiler}
David~E Hudak and Santosh~G Abraham.
\newblock Compiler techniques for data partitioning of sequentially iterated
  parallel loops.
\newblock In {\em ACM SIGARCH Computer Architecture News}, volume~18, pages
  187--200. ACM, 1990.

\bibitem{knobe1990data}
Kathleen Knobe, Joan~D Lukas, and Guy~L Steele~Jr.
\newblock Data optimization: Allocation of arrays to reduce communication on
  simd machines.
\newblock {\em Journal of Parallel and Distributed Computing}, 8(2):102--118,
  1990.

\bibitem{philippsen1995automatic}
Michael Philippsen.
\newblock {\em Automatic alignment of array data and processes to reduce
  communication time on DMPPs}, volume~30.
\newblock ACM, 1995.

\bibitem{milosavljevic1999automatic}
Igor~Z Milosavljevic and Marwan~A Jabri.
\newblock Automatic array alignment in parallel matlab scripts.
\newblock In {\em Parallel Processing, 1999. 13th International and 10th
  Symposium on Parallel and Distributed Processing, 1999. 1999 IPPS/SPDP.
  Proceedings}, pages 285--289. IEEE, 1999.

\bibitem{ramanujam1991compile}
J~Ramanujam and P~Sadayappan.
\newblock Compile-time techniques for data distribution in distributed memory
  machines.
\newblock {\em Parallel and Distributed Systems, IEEE Transactions on},
  2(4):472--482, 1991.

\bibitem{ramanujam1989methodology}
J~Ramanujam and P~Sadayappan.
\newblock A methodology for parallelizing programs for multicomputers and
  complex memory multiprocessors.
\newblock In {\em Proceedings of the 1989 ACM/IEEE conference on
  Supercomputing}, pages 637--646. ACM, 1989.

\bibitem{bau1995solving}
David Bau, Induprakas Kodukula, Vladimir Kotlyar, Keshav Pingali, and Paul
  Stodghill.
\newblock Solving alignment using elementary linear algebra.
\newblock In {\em Languages and Compilers for Parallel Computing}, pages
  46--60. Springer, 1995.

\bibitem{d1989partitioning}
ERIKH D'HOLLANDER.
\newblock Partitioning and labeling of index sets in do loops with constant
  dependence vectors.
\newblock In {\em 1989 International Conference on Parallel Processing,
  University Park, PA}, 1989.

\bibitem{huang1993communication}
Chua-Huang Huang and Ponnuswamy Sadayappan.
\newblock Communication-free hyperplane partitioning of nested loops.
\newblock {\em Journal of Parallel and Distributed Computing}, 19(2):90--102,
  1993.

\bibitem{ju1992reduction}
Y-J Ju and H~Dietz.
\newblock Reduction of cache coherence overhead by compiler data layout and
  loop transformation.
\newblock In {\em Languages and Compilers for Parallel Computing}, pages
  344--358. Springer, 1992.

\bibitem{lu2009data}
Qingda Lu, Christophe Alias, Uday Bondhugula, Thomas Henretty, Sriram
  Krishnamoorthy, Jagannathan Ramanujam, Atanas Rountev, Ponnuswamy Sadayappan,
  Yongjian Chen, Haibo Lin, et~al.
\newblock Data layout transformation for enhancing data locality on nuca chip
  multiprocessors.
\newblock In {\em Parallel Architectures and Compilation Techniques, 2009.
  PACT'09. 18th International Conference on}, pages 348--357. IEEE, 2009.

\bibitem{powergraph:osdi12}
Joseph~E. Gonzalez, Yucheng Low, Haijie Gu, Danny Bickson, and Carlos Guestrin.
\newblock Powergraph: Distributed graph-parallel computation on natural graphs.
\newblock In {\em OSDI}, 2012.

\bibitem{mapreduce}
Jeff Dean and Sanjay Ghemawat.
\newblock Mapreduce: Simplified data processing on large clusters.
\newblock In {\em Symposium on Operating System Design and Implementation
  (OSDI)}, 2004.

\end{thebibliography}

\appendix
\section{Recursive Partitioning Algorithm and its Correctness}
\subsection{Recursive partition plan}
We first formally define the partition plan of a dataflow graph. Given a dataflow
graph $G$, a partition plan $P$ consists of the choices of how each tensor is partitioned
and how each operator is paralleled. Note that
the tensor can be partitioned along multiple dimensions but the number of splits
should be equal to the number of GPUs.

Given $2^m$ GPUs, any partition plan for a dataflow graph can be realized by
a sequence of recursive steps, $\langle p_1, p_2,\ldots, p_m \rangle$, where each $p_i$
is a \emph{basic partition plan} that partitions tensors along only one dimension 
among two (groups of) workers. Note that after $i$ steps, there are $2^i$ identical 
sub-dataflow graphs whose tensors are $1/2^i$ the original size. So the $p_{i+1}$
basic partition plan is applied to all $2^i$ sub-graphs.

\name's recursive partition algorithm chooses a
sequence of partition plans $P=\langle p_1, p_2,\ldots, p_m \rangle$
in $m$ recursive steps
and we want to show that this sequence is no worse than the optimal sequence
$O=\langle o_1, o_2,\dots,o_m \rangle$.


\subsection{Region Analysis}
Recall in Sec~\ref{ss:region}, we use symbolic interval to
analyze the access pattern of an operator. Let $\mathcal{X}_1,\ldots,\mathcal{X}_n$
and $\mathcal{Y}_1,\ldots,\mathcal{Y}_k$ be the symbolic upper bound of
each output index and access range of each input dimension, respectively.
The analysis produces following affine transformation:
\begin{equation}
\begin{pmatrix}
\mathcal{Y}_1 \\
\vdots \\
\mathcal{Y}_k
\end{pmatrix}
=
\begin{pmatrix}
\alpha_{11} & \alpha_{12} & \dots \\
\vdots & \ddots & \\
\alpha_{k1} & & \alpha_{k(n+1)}
\end{pmatrix} 
\begin{pmatrix}
\mathcal{X}_1 \\
\vdots \\
\mathcal{X}_n \\
1
\end{pmatrix}
\end{equation}

Here, we consider a restricted form of affine transformation. We prove
that the recursive algorithm is optimal under the following assumptions.
Whether these assumptions are necessary or not requires further
study.

\paragraph{Assumption\#1.} Each output index is used to access only
one dimension for each input tensor. The same output index can be used in
multiple input tensors such as element-wise operators
\code{\kw{lambda} i : A[i] + B[i]}, but \code{\kw{lambda} i: A[i, i]} is
not considered. In practice, we do not encounter any such example
when investigating operators in MXNet and Tensorflow.

\paragraph{Assumption\#2.} We only consider input dimensions
that scale linearly with one output index (i.e, in the form
of $\mathcal{Y}_i=\alpha_i\mathcal{X}_j$). For example, we
ignore partitioning on the third
dimension of \code{data} in \code{conv1d} (Figure~\ref{fig:conv1dtdl}).
This restriction rules out the partition-n-reduce strategies such
as halo exchange in convolution, but still includes many others
such as partitioning on channel dimension.

Because of the above assumptions, one immediate corollary is as follow.
\begin{corollary}\label{thm:shape}
Consider an operator that has output shape
$\mathcal{X}_1\times\ldots\times\mathcal{X}_n$.
The shape of any of its input tensors can be represented as
$\beta_{1}\mathcal{X}_{\pi_{1}}\times\ldots\times\beta_{d}\mathcal{X}_{\pi_{d}}$,
where $d$ is the number of dimensions, $\beta_1\ldots\beta_d$ are constants
and $\pi$ is a permutation of $1\ldots n$.
\end{corollary}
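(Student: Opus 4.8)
The plan is to obtain the claimed product form by specializing the affine transformation produced by the region analysis to a single fixed input tensor and then collapsing it using the two assumptions. First I would fix one input tensor, say with its dimensions relabeled $1,\ldots,d$, and recall that its shape in dimension $i$ is exactly the symbolic upper bound $\mathcal{Y}_i$ of the access range computed for that dimension: a well-formed operator reads the entirety of each input dimension it touches, so the symbolic access range coincides with the symbolic extent. Hence it suffices to pin down the form of each $\mathcal{Y}_i$ and the structure of the index map it induces.

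Next I would invoke Assumption\#2 to fix the form of a single dimension. By hypothesis we retain only those input dimensions whose access range scales linearly with one output index, i.e.\ row $i$ of the matrix $(\alpha_{ij})$ in the displayed affine transformation has a single nonzero entry among the columns $\mathcal{X}_1,\ldots,\mathcal{X}_n$ and a zero constant term. Writing $\pi(i)$ for the column of that entry and $\beta_i := \alpha_{i\pi(i)}$ gives $\mathcal{Y}_i = \beta_i\mathcal{X}_{\pi(i)}$, which is precisely the per-dimension shape asserted by the corollary.

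Then I would use Assumption\#1 to show the map $\pi$ is injective. Assumption\#1 says that, for this fixed tensor, no output index drives the access of more than one of its dimensions. So if $\pi(i)=\pi(i')$ for distinct $i\neq i'$, the output index attached to $\mathcal{X}_{\pi(i)}$ would govern two dimensions of the same tensor, a contradiction. Thus $\pi$ is an injection of $\{1,\ldots,d\}$ into $\{1,\ldots,n\}$ (a genuine permutation of $1\ldots n$ when $d=n$), and assembling the $d$ dimensions yields the product shape $\beta_1\mathcal{X}_{\pi_1}\times\cdots\times\beta_d\mathcal{X}_{\pi_d}$.

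The hard part will be the first step rather than the combinatorics: justifying that the access range may be identified with the input shape. For reduction dimensions and for dimensions read with an offset this identification fails, but those are exactly the cases Assumption\#2 excludes, so the cleanest framing is to restrict the quantifier to the dimensions surviving Assumption\#2, where access range equals full extent by construction. Once that identification is granted, the single-term form and the injectivity of $\pi$ are immediate bookkeeping consequences of the two assumptions rather than any real computation.
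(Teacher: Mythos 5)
The paper offers no proof of this corollary---it is asserted as an ``immediate'' consequence of Assumptions \#1 and \#2---and your argument is precisely the intended unpacking: Assumption \#2 forces each retained input dimension to have the form $\mathcal{Y}_i=\beta_i\mathcal{X}_{\pi(i)}$ (single nonzero coefficient, zero constant term), and Assumption \#1 makes $\pi$ injective. The caveat you flag yourself---identifying the symbolic access range with the tensor's extent, and restricting the claim to the dimensions surviving Assumption \#2, since reduction dimensions such as the filter-window dimension of \code{conv1d} are not proportional to any output extent---is an imprecision in the corollary's statement rather than a gap in your proof, and your restricted reading is the one the subsequent cost lemma implicitly relies on.
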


\subsection{Communication cost}

\begin{lemma}\label{thm:prop}
Let $\mathcal{T}(G)$ deonte the set of all tensors in a dataflow graph $G$.
The communication cost incurred by a basic partition plan $p$ is
a weighted sum of the size of each tensor:
\[
\text{cost}(p)=\sum_{t\in{\mathcal{T}(G)}}\alpha_tS_t\triangleq\vec{\alpha_p}\cdot\vec{S}
\]
, where $\alpha_t$ is some constant and $S_t$ is the size of tensor $t$.
We can thus further write it as dot product of two vectors.
\end{lemma}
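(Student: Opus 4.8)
The plan is to decompose the total communication cost of a basic partition plan $p$ into per-interaction contributions and to show that each contribution is a constant multiple of some tensor's size; collecting the terms by tensor then yields the claimed weighted sum $\vec{\alpha_p}\cdot\vec{S}$. First I would identify where data movement arises. Since $p$ splits every tensor into two halves along a single dimension and parallelizes each operator via \pnr, the only two sources of communication are (i) \emph{remote input fetches}, where a worker executing its share of an operator $c$ must read a region of an input tensor $t$ that it does not own locally, and (ii) \emph{output reductions}, where the two partial outputs of a Case-2 strategy must be aggregated across workers. Thus I would write $\text{cost}(p)=\sum_{c}\left(\text{fetch cost of }c+\text{reduction cost of }c\right)$ and analyze the two kinds of terms separately.

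For a fixed operator $c$ and input tensor $t$, I would bound the fetch cost. The plan $p$ stores $t$ split into two halves along one of its dimensions, so each worker owns an axis-aligned sub-box of $t$. The partition strategy chosen for $c$ determines, via the symbolic interval analysis of \secref{ss:region}, the region of $t$ each worker must read; by Assumption\#1 and Assumption\#2 this read region is itself a sub-box whose extent along each dimension is either the full dimension or exactly one half of it. The communication for this interaction is the part of the read box lying outside the owned box, summed over the two workers. The key step is to observe that, because both boxes are obtained by halving along single dimensions and because by Corollary~\ref{thm:shape} every dimension-length of $t$ is a scaled copy of an output index, the volume of this ``read-but-not-owned'' region is a fixed rational fraction of $S_t$ (for instance $0$ when the two partitions align, $\tfrac14 S_t$ per worker under a transpose-like mismatch, and $\tfrac12 S_t$ per worker when $c$ must read all of $t$). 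Crucially, this fraction depends only on the combinatorial choices recorded in $p$ — which dimension $t$ is split along and how $c$ reads it — and not on the numerical shapes, so the fetch cost equals $\alpha_{c,t}\,S_t$ for a constant $\alpha_{c,t}$.

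The output-reduction terms are handled analogously: when $c$ uses a reduction strategy, the two partial outputs each carry the full output shape and are combined (via all-reduce) into the partitioned output, so the reduction cost is a constant $\gamma_c$ times the output size $S_{O_c}$, again fixed by $p$ alone. Substituting these two facts and regrouping the sum over operators into a sum over tensors gives $\text{cost}(p)=\sum_{t\in\mathcal{T}(G)}\alpha_t S_t$, where $\alpha_t$ collects every $\alpha_{c,t}$ for which $t$ is an input of $c$ together with the reduction coefficient $\gamma_{c'}$ of the operator $c'$ producing $t$. Reading $\alpha_t$ as the $t$-th component of a vector $\vec{\alpha_p}$ then yields the dot-product form.

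The main obstacle is the middle step: rigorously establishing, from the affine interval arithmetic of Figure~\ref{fig:int-arith} and the two assumptions, that every read-but-not-owned region has volume exactly a constant fraction of $S_t$ independent of the shapes. This is precisely where Assumption\#1 and Assumption\#2 do the work — they rule out halo-exchange and diagonal accesses, under which the fetched volume would scale as a lower-dimensional boundary (e.g. a halo of fixed width times a product of the remaining dimensions) and hence fail to be proportional to $S_t$. I expect the remaining bookkeeping — enumerating the alignment cases and summing the per-interaction coefficients into the $\alpha_t$ — to be routine once this proportionality is in hand.
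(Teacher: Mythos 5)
Your proposal follows essentially the same route as the paper's proof: both decompose the cost into remote input fetches plus output-side communication, invoke Corollary~\ref{thm:shape} (via Assumptions 1 and 2) to show each such term is a shape-independent constant fraction of the corresponding tensor's size, and then regroup the per-operator terms into a per-tensor weighted sum. Your version merely spells out the alignment case analysis (the $0$, $\tfrac14$, and $\tfrac12$ fractions) in more detail than the paper does, which is a faithful elaboration rather than a different argument.
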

\begin{proof}

Communication happens in two situations:
\begin{itemize}[leftmargin=*]
	\setlength\itemsep{0em}
	\item The selected partition-n-reduce strategy requires input
	region that is not available locally.
	\item The selected partition-n-reduce strategy produces output
	region that is assigned to other devices.
\end{itemize}
Consider an operator whose output tensor $t$ has shape
$\mathcal{X}_1\times\ldots\times\mathcal{X}_n$
and the partition plan
$p$ chooses to partition the dimension $i$ into halves. For the first case,
by Corollary~\ref{thm:shape},
the communication required to fetch one of the input tensor is either:
\[
\frac{1}{2}\Pi_{j=1}^d\beta_j\mathcal{X}_{\pi_j}=\left(\frac{1}{2}\Pi_{j=1}^d\beta_j\right) S_t
\]
if $i$ is not included in $\pi_1\ldots\pi_d$ (i.e, the whole tensor is needed), or
\[
\frac{1}{4}\Pi_{j=1}^d\beta_j\mathcal{X}_{\pi_j}=\left(\frac{1}{4}\Pi_{j=1}^d\beta_j\right) S_t
\]
, otherwise.

The same analysis can be applied to the second case. Because the total
communication is the summation of the cost to fetch each input and output
tensor, the result is a weighted sum of each tensor size.
\end{proof}

Let $\text{cost}(P)$ be the total communication cost of a partition plan sequence $P$.
Due to symmetry of each worker group, the cost can be calculated by aggregating the within-group
communication cost incurred by each basic partition plan:
\begin{equation}
\text{cost}(P)=\sum_{i=1}^k 2^{i-1}\text{cost}(p_i)
\end{equation}

We can then show that the following commutativity property holds:
\begin{theorem}\label{thm:comm}
$\text{cost}(\langle p_1,p_2 \rangle)=\text{cost}(\langle p_2,p_1 \rangle)$,
where $p_1$ and $p_2$ are basic partition plans.
\end{theorem}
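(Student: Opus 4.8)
The plan is to prove the two-plan case directly from the total-cost formula together with the linearity established in Lemma~\ref{thm:prop}, and then to reduce the claimed equality to a tensor-by-tensor comparison. First I would specialize the total-cost formula to a length-two sequence, obtaining $\text{cost}(\langle p_1,p_2\rangle)=\text{cost}(p_1)+2\,\text{cost}(p_2)$ and $\text{cost}(\langle p_2,p_1\rangle)=\text{cost}(p_2)+2\,\text{cost}(p_1)$, where the factor $2=2^{2-1}$ counts the two worker groups present when the second basic plan is applied. The point that must be tracked carefully is that the second basic plan is applied to the sub-dataflow graph left by the first: applying a basic plan rescales the size of each tensor by a factor that depends only on whether that plan partitions one of that tensor's own dimensions, so $\text{cost}(p_2)$ in the first ordering and in the second ordering are in general evaluated against different tensor sizes.

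Next I would invoke Lemma~\ref{thm:prop} to write every single-step cost as a dot product $\vec{\alpha}_{p}\cdot\vec{S}$, and Corollary~\ref{thm:shape} to argue that the coefficient $\alpha_{p,t}$ attached to a tensor $t$ is determined by the operator's access pattern and the chosen partition dimension alone; in particular it does not depend on the order in which $p_1$ and $p_2$ are applied. Substituting the rescaled sizes into the second-step term, the difference $\text{cost}(\langle p_1,p_2\rangle)-\text{cost}(\langle p_2,p_1\rangle)$ collapses into a single sum over $t\in\mathcal{T}(G)$ of $S_t$ times an expression built from $\alpha_{p_1,t}$, $\alpha_{p_2,t}$, and the two rescaling factors. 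It then suffices to show that this per-tensor coefficient vanishes for every $t$, which decouples the global claim into independent, purely local checks.

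Finally I would discharge each local check by a case analysis mirroring the proof of Lemma~\ref{thm:prop}: for each of $p_1,p_2$, whether it partitions a dimension that actually appears in $t$ (the split case, giving a quarter-size fetch and a halving of $t$) or not (the whole-tensor/replication case, giving a half-size fetch), together with the analogous split between input-fetch and output-reduction communication. The main obstacle I anticipate lies precisely here: a tensor may be halved by one plan yet left at full size by the other, so the two orderings apply the coefficients against genuinely different sizes, and a naive accounting of the rescaling does \emph{not} cancel. The proof therefore has to use the correct convention for how the sub-problem size and the worker-group multiplicity $2^{i-1}$ interact at the deeper recursion level, and show that this multiplicity exactly compensates the size asymmetry in every case. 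I would pin down this convention on a concrete operator with both output and reduction dimensions (for example matrix multiplication) before asserting the general cancellation, since it is the sign and magnitude of this per-tensor discrepancy on which the entire commutativity claim rests.
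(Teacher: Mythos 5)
Your setup is the same as the paper's: expand $\text{cost}(\langle p_1,p_2\rangle)=\text{cost}(p_1)+2\,\text{cost}(p_2)$, use Lemma~\ref{thm:prop} to write each step's cost as $\vec{\alpha}_p\cdot\vec{S}$ with coefficients that depend only on the plan and the access pattern (not on the order of application), and compare the two orderings. But you stop exactly at the step that carries the whole theorem and leave it as an unresolved ``obstacle'': you worry that ``a tensor may be halved by one plan yet left at full size by the other,'' so that the second-step costs are evaluated against genuinely different size vectors, and you defer the cancellation to a future case analysis that you yourself suspect ``does not cancel'' naively. That situation cannot arise in the paper's model, and recognizing this is the entire content of the proof. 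A basic partition plan, by definition, partitions \emph{every} tensor in the graph into two halves along some dimension (the appendix stipulates that the number of splits equals the number of workers, and \name{} always partitions every tensor across all workers); replication shows up as communication, i.e., in the coefficient $\alpha_t$ (a worker fetches the half it does not own), never as a tensor whose stored size fails to halve. Hence the size vector after one step is $\vec{S_1}=\vec{S_2}=\tfrac{1}{2}\vec{S}$ independently of which plan was applied first, and the identity collapses to the two-line swap $\vec{\alpha}_1\cdot\vec{S}+2\,\vec{\alpha}_2\cdot\tfrac{1}{2}\vec{S}=\vec{\alpha}_1\cdot\vec{S}+\vec{\alpha}_2\cdot\vec{S}=\vec{\alpha}_2\cdot\vec{S}+2\,\vec{\alpha}_1\cdot\tfrac{1}{2}\vec{S}$.

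Concretely, then, the gap is a missing idea rather than a wrong calculation: you never supply the fact that the post-step size vector is plan-independent, and the per-tensor case analysis you propose instead (split versus ``whole-tensor/replication,'' with different rescaling factors per case) conflates the \emph{communication volume} a plan induces for a tensor with the \emph{partitioned size} of that tensor in the residual subproblem. The former varies by case and lives in $\vec{\alpha}_p$; the latter is always exactly one half and lives in $\vec{S}$. Once you separate these two roles, no case analysis and no ``compensation by the multiplicity $2^{i-1}$'' is needed; the factor of $2$ and the factor of $\tfrac{1}{2}$ cancel identically for every tensor and every case.
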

\begin{proof}
The case is trivial if $p_1=p_2$. Let $G$ be the unpartitioned
dataflow graph; $G_{1}$ and $G_{2}$ be the partitioned graph
by $p_1$ and $p_2$; Let $\vec{S_1}$ and $\vec{S_2}$ be the tensor
size vectors of  $G_{1}$ and $G_{2}$ defined in lemma~\ref{thm:prop}.
Because every tensor is partitioned
by half, $\vec{S_1}=\vec{S_2}=\frac{1}{2}\vec{S}$. By lemma~\ref{thm:prop},
we then have:
\[
\begin{array}{rl}
\text{cost}(\langle p_1,p_2 \rangle) &= \text{cost}(p_1) + 2 * \text{cost}(p_2) \\
&= \vec{\alpha_1}\cdot\vec{S} + 2 * \vec{\alpha_2}\cdot\vec{S_1} \\
&= \vec{\alpha_1}\cdot 2\vec{S_2} + 2 * \vec{\alpha_2}\cdot\frac{1}{2}\vec{S} \\
&= \vec{\alpha_2}\cdot\vec{S} + 2 * \vec{\alpha_1}\cdot\vec{S_2}\\
&= \text{cost}(p_2) + 2 * \text{cost}(p_1) \\
&= \text{cost}(\langle p_2,p_1 \rangle)
\end{array}
\]
\end{proof}

Let the per-step cost be $\delta_i=2^{i-1}\text{cost}(p_i)$. We can easily
prove theorem~\ref{thm:greedy}.
\begin{theorem}\label{thm:greedy}
	Let the total communication cost incurred by all worker groups at step $i$ be $\delta_i$. Then $\delta_i\leq \delta_{i+1}$.
\end{theorem}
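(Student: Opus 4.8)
The plan is to read off $\delta_i\le\delta_{i+1}$ as a direct consequence of three things already in hand: the linear cost formula of Lemma~\ref{thm:prop}, the commutativity of Theorem~\ref{thm:comm}, and the fact that the DP search makes a \emph{locally cost-minimal} choice at every recursive step. The strategy is to first show that the step cost $\delta_i$ is a functional of the chosen plan alone, then to use greedy minimality together with commutativity to compare consecutive steps.

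First I would pin down what $\delta_i$ depends on. After $i-1$ recursive steps the graph has split into $2^{i-1}$ identical subgraphs, each carrying tensors of size $\tfrac{1}{2^{i-1}}\vec{S}$. By Lemma~\ref{thm:prop}, the within-group cost of a basic plan $p$ on one such subgraph is $\vec{\alpha_{p}}\cdot\tfrac{1}{2^{i-1}}\vec{S}$, since the coefficient vector $\vec{\alpha_{p}}$ is determined by the operator structure and the partitioned dimension, not by tensor sizes. Summing over all $2^{i-1}$ groups gives
\[
\delta_i = 2^{i-1}\,\text{cost}(p_i) = 2^{i-1}\,\vec{\alpha_{p_i}}\cdot\tfrac{1}{2^{i-1}}\vec{S} = \vec{\alpha_{p_i}}\cdot\vec{S}.
\]
The key consequence is that the total step cost of a basic plan is independent of \emph{which} step it is applied in; it is the fixed functional $\vec{\alpha_{p}}\cdot\vec{S}$ of the plan alone. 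This step-independence is precisely the algebraic content of Theorem~\ref{thm:comm}.

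Next I would invoke the greedy nature of the recursion. Because the DP algorithm at step $i$ partitions the current graph so as to minimize its communication, and the factor $2^{i-1}$ is constant within the step, $p_i$ is exactly the basic plan of least step cost among all candidates $q$ available at step $i$, i.e.\ $\vec{\alpha_{p_i}}\cdot\vec{S}\le\vec{\alpha_{q}}\cdot\vec{S}$. To finish it therefore suffices to show that $p_{i+1}$ is itself a legitimate candidate at step $i$, for then minimality at step $i$ yields $\delta_i=\vec{\alpha_{p_i}}\cdot\vec{S}\le\vec{\alpha_{p_{i+1}}}\cdot\vec{S}=\delta_{i+1}$.

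Establishing this last point is where I expect the genuine work to lie, and it is what Theorem~\ref{thm:comm} is for: commutativity certifies that the ordered pair $\langle p_i,p_{i+1}\rangle$ may be exchanged for $\langle p_{i+1},p_i\rangle$ at equal total cost, so $p_{i+1}$ is a valid step-$i$ choice (and $p_i$ a valid step-$(i+1)$ choice). The delicate subtlety is that each recursive step augments a subgraph with the regions fetched from its sibling group, so a priori $p_{i+1}$ acts on tensors that did not exist at step $i$. I would discharge this by arguing, under Assumptions~1 and~2, that every fetched tensor inherits the affine shape relation of Corollary~\ref{thm:shape}; hence $\vec{\alpha_{p_{i+1}}}$ is well defined on the step-$i$ graph and the reordering of Theorem~\ref{thm:comm} applies verbatim. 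Equivalently, one sees that the candidate set can only shrink across steps (a dimension once exhausted by partitioning becomes unavailable), so the per-step minimum is nondecreasing. With $p_{i+1}$ recognized as a step-$i$ candidate, greedy minimality gives $\delta_i\le\delta_{i+1}$ at once.
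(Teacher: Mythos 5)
Your proposal is correct and follows essentially the same route as the paper's proof: both rest on the commutativity of basic partition plans (Theorem~\ref{thm:comm}), which lets $p_{i+1}$ be regarded as a candidate at step $i$, combined with the per-step optimality of the DP search. The only difference is presentational --- you argue directly that $\delta_i=\vec{\alpha_{p_i}}\cdot\vec{S}$ is minimal over the candidate set containing $p_{i+1}$, whereas the paper phrases the same comparison as a contradiction --- and your added remark about fetched tensors inheriting the affine shape relation makes explicit a point the paper leaves implicit.
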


\begin{proof}
We prove by contradiction.
Suppose there exists a sequence $\langle p_1,\ldots,p_i,p_{i+1}\rangle$ such that
$\delta_i >\delta_{i+1}$. By theorem~\ref{thm:comm},
\[
\text{cost}(\langle p_1,\ldots,p_i,p_{i+1}\rangle)=\text{cost}(\langle p_1,\ldots,p_{i+1},p_i\rangle)
\]
Because $\delta_i >\delta_{i+1}$, we have
\[
\text{cost}(\langle p_1,\ldots,p_{i+1}\rangle)<\text{cost}(\langle p_1,\ldots,p_i\rangle)
\]
This means applying $p_{i+1}$ instead of $p_i$ at step $i$ is a more optimized partitioning,
which contradicts with the per-step optimality of the dynamic programming algorithm.
\end{proof}

\subsection{Optimiality proof}
\begin{theorem}
The recursive algorithm is optimal.
\end{theorem}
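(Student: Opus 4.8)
The plan is to combine three ingredients already established: the additive cost decomposition $\text{cost}(P)=\sum_{i=1}^m \delta_i$ with $\delta_i=2^{i-1}\text{cost}(p_i)$, the commutativity of basic partition plans (Theorem~\ref{thm:comm}), and the per-step optimality of the DP search together with the monotonicity $\delta_1\le\cdots\le\delta_m$ of Theorem~\ref{thm:greedy}. The first step is to observe that Theorem~\ref{thm:comm}, which swaps an adjacent pair, extends to arbitrary permutations: reordering the basic plans in a sequence leaves its total cost unchanged. This is transparent once one notes, via Lemma~\ref{thm:prop}, that $\delta_i$ equals the intrinsic quantity $\vec{\alpha}_{p_i}\cdot\vec{S}$, independent of the position $i$ at which $p_i$ is applied (the factor $2^{i-1}$ cancels the $1/2^{i-1}$ shrinkage of the tensors). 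Hence $\text{cost}(P)=\sum_i \vec{\alpha}_{p_i}\cdot\vec{S}$ depends only on the multiset of chosen basic plans, and I may sort the per-step costs of any sequence into nondecreasing order without affecting its total cost.

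Next I would set up an exchange argument. Let $P=\langle p_1,\dots,p_m\rangle$ be the sequence produced by the recursive algorithm, already sorted as $\delta_1\le\cdots\le\delta_m$ by Theorem~\ref{thm:greedy}, and let $O=\langle o_1,\dots,o_m\rangle$ be an optimal sequence, which by the previous paragraph I reorder so that its per-step costs $\delta'_1\le\cdots\le\delta'_m$ are also nondecreasing. Arguing by contradiction, suppose $\text{cost}(P)>\text{cost}(O)$, i.e. $\sum_i\delta_i>\sum_i\delta'_i$. Since both cost vectors are sorted, there is a first index $k$ with $\delta_k>\delta'_k$, while $\delta_j\le\delta'_j$ for every $j<k$. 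In particular the first $k$ plans $o_1,\dots,o_k$ of $O$ each have intrinsic cost at most $\delta'_k<\delta_k$.

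The contradiction is meant to come from the per-step optimality of the DP. At step $k$ the recursive algorithm has committed to $p_1,\dots,p_{k-1}$ and then selects $p_k$ as the cheapest basic plan available on the current graph, giving per-step cost $\delta_k$. I would show that at least one of the cheap plans $o_1,\dots,o_k$ is still selectable at this step; by commutativity its per-step cost equals its intrinsic cost, which is strictly below $\delta_k$, contradicting the minimality of $\delta_k$. Since only $k-1$ plans have been fixed by the greedy search but $O$ supplies $k$ strictly cheaper ones, a counting/augmentation argument guarantees that such an unused, feasible alternative exists, and the commutative structure is exactly what lets me evaluate its cost independently of when it is applied. Concluding $\text{cost}(P)\le\text{cost}(O)$ then forces equality, establishing optimality.

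The main obstacle is precisely this feasibility (\emph{availability}) step. Theorem~\ref{thm:comm} guarantees that reordering basic plans preserves \emph{cost}, but to contradict per-step DP optimality I additionally need that the set of length-$m$ feasible sequences is closed under permutation, so that a cheap plan drawn from $O$ is genuinely a legal choice for the greedy search at step $k$ after $p_1,\dots,p_{k-1}$ have been applied. Proving this order-independence of feasibility---essentially a matroid-style exchange property for how basic partition plans compose under Assumptions~1 and~2---is the delicate part; once it is in hand, the sorted-cost exchange argument above closes the proof.
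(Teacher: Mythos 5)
Your skeleton is the same as the paper's: use commutativity (Theorem~\ref{thm:comm}) to conclude that the total cost depends only on the multiset of basic plans chosen, sort both the algorithm's sequence and the optimal sequence into nondecreasing per-step cost, and then run an exchange argument that contradicts the per-step optimality of the DP. The one place you stop short --- the ``availability'' of a cheap plan from $O$ at step $k$ --- is not actually a gap in the paper's formalization. In that model a basic partition plan is never \emph{consumed}: after $i$ steps the graph consists of $2^i$ identical subgraphs whose tensors are $1/2^i$ the original size, so every basic plan remains a legal choice at every step, and by Lemma~\ref{thm:prop} its weighted step cost is the position-independent quantity $\vec{\alpha_p}\cdot\vec{S}$ --- exactly the fact you derived in your first paragraph. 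Hence no matroid-style exchange property is needed, and your counting argument comparing $\{o_1,\dots,o_k\}$ with $\{p_1,\dots,p_{k-1}\}$ rests on a false premise (that previously chosen plans become unavailable; they do not --- choosing the same dimension twice simply splits it four ways). At your first sorted index $k$ with $\delta_k>\delta'_k$, the single plan $o_k$ is already selectable at step $k$ with step cost $\delta'_k<\delta_k$, which is the desired contradiction. The paper executes the same exchange slightly differently: it locates the crossing point of the \emph{prefix sums} $\text{cost}(\langle p_1,\dots,p_i\rangle)$ versus $\text{cost}(\langle o_1,\dots,o_i\rangle)$ and substitutes $o_{i+1}$ after the $p$-prefix, using $\vec{S_p}=\vec{S_o}$ to transfer its cost; both variants work. (Your instinct that feasibility deserves scrutiny does touch a real informality --- the recursion adds remotely fetched tensors as extra inputs at each step, which the cost model quietly ignores --- but that is an issue with the paper's assumptions, not an extra lemma your proof needs beyond what the paper already asserts.) In short: same approach, and the step you flag as delicate closes immediately once you notice that the position-independence of cost you already established gives position-independence of feasibility for free.
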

\begin{proof}
Let $P=\langle p_1, p_2,\ldots, p_k \rangle$ be the partition sequence
produced by the recursive algorithm and $O=\langle o_1, o_2,\dots,o_k \rangle$
be the optimal sequence. By theorem~\ref{thm:comm},
we can reorder the sequence so that the per-step costs
of both sequences are non-descending.

We prove by contradiction. Suppose $\text{cost}(P)>\text{cost}(O)$.
Then there must exist a step $i$ such that:
\begin{equation}
\text{cost}(\langle p_1, \ldots, p_i \rangle) \leqslant \text{cost}(\langle o_1, \ldots, o_i \rangle)
\end{equation}
\begin{equation}
\text{cost}(\langle p_1, \ldots, p_i, p_{i+1} \rangle) > \text{cost}(\langle o_1, \ldots, o_i, o_{i+1} \rangle)
\end{equation}

Let $G_p$ and $G_o$ be the partitioned dataflow graphs after applying
$\langle p_1, \ldots, p_i \rangle$
and $\langle o_1, \ldots, o_i \rangle$, respectively. Every tensor
is only $2^i$ of the size of original tensor so
$\vec{S_p}=\vec{S_o}$. Finally, by lemma~\ref{thm:prop}, we have
\begin{equation*}
\begin{array}{rl}
\text{cost}(\langle a_1, \ldots, a_i, a_{i+1} \rangle) &> \text{cost}(\langle o_1, \ldots, o_i, o_{i+1} \rangle) \\
&= \text{cost}(\langle o_1, \ldots, o_i\rangle)+2^i \vec{\alpha_o}\cdot\vec{S_o} \\
&\geq \text{cost}(\langle a_1, \ldots, a_i \rangle) + 2^i \vec{\alpha_o}\cdot\vec{S_a} \\
&= \text{cost}(\langle a_1, \ldots, a_i, o_{i+1} \rangle)
\end{array}
\end{equation*}

Hence, applying $o_{i+1}$ at step $i+1$ produces strictly less communication
cost than applying $a_{i+1}$, which contradicts the per-step optimality of the dynamic programming algorithm.

\end{proof}

\end{document}